  \providecommand\BibTeX{{%
    \normalfont B\kern-0.5em{\scshape i\kern-0.25em b}\kern-0.8em\TeX}}}
\begin{document}

\title{Space and Move-optimal Arbitrary Pattern Formation on a Rectangular Grid by Robot Swarms}

\author{Avisek Sharma}
\email{aviseks.math.rs@jadavpuruniversity.in}
\orcid{0000-0001-8940-392X}
\affiliation{%
  \institution{Department of mathematics, Jadavpur University}
  \streetaddress{188, Raja S C Mullick Road}
  \city{Kolkata}
  \state{West Bengal}
  \country{India}
  \postcode{700032}
}

\author{Satakshi Ghosh}
\email{satakshighosh.math.rs@jadavpuruniversity.in}
\orcid{0000-0003-1747-4037}
\affiliation{%
  \institution{Department of mathematics, Jadavpur University}
  \streetaddress{188, Raja S C Mullick Road}
  \city{Kolkata}
  \state{West Bengal}
  \country{India}
  \postcode{700032}
}

\author{Pritam Goswami}
\email{pritamgoswami.math.rs@jadavpuruniversity.in}
\orcid{0000-0002-0546-3894}
\affiliation{%
  \institution{Jadavpur University}
  \streetaddress{188, Raja S C Mullick Road}
  \city{Kolkata}
  \state{West Bengal}
  \country{India}
  \postcode{700032}
}

\author{Buddhadeb Sau}
\email{buddhadeb.sau@jadavpuruniversity.in}
\orcid{0000-0001-7008-6135}
\affiliation{%
  \institution{Jadavpur University}
  \streetaddress{188, Raja S C Mullick Road}
  \city{Kolkata}
  \state{West Bengal}
  \country{India}
  \postcode{700032}
}








\begin{abstract}
Arbitrary pattern formation (\textsc{Apf}) is a well-studied problem in swarm robotics. To the best of our knowledge, the problem has been considered in two different settings: one in a euclidean plane and another in an infinite grid. This work deals with the problem in an infinite rectangular grid setting. The previous works in literature dealing with the \textsc{Apf} problem in an infinite grid had a fundamental issue. These deterministic algorithms use a lot of space in the grid to solve the problem, mainly to maintain the asymmetry of the configuration or to avoid a collision. These solution techniques cannot be useful if there is a space constraint in the application field. In this work, we consider luminous robots (with one light that can take three colors) to avoid symmetry, but we carefully designed a deterministic algorithm that solves the \textsc{Apf} problem using the minimal required space in the grid. The robots are autonomous, identical, and anonymous, and they operate in Look-Compute-Move cycles under a fully asynchronous scheduler. The \textsc{Apf} algorithm proposed in \cite{BOSE2020} by Bose et al. can be modified using luminous robots so that it uses minimal space, but that algorithm is not move-optimal. The algorithm proposed in this paper not only uses minimal space but is also asymptotically move-optimal. The algorithm proposed in this work is designed for an infinite rectangular grid, but it can be easily modified to work on a finite grid as well.
\end{abstract}

\begin{CCSXML}
<ccs2012>
 <concept>
  <concept_id>10010520.10010553.10010562</concept_id>
  <concept_desc>Computer systems organization~Embedded systems</concept_desc>
  <concept_significance>500</concept_significance>
 </concept>
 <concept>
  <concept_id>10010520.10010575.10010755</concept_id>
  <concept_desc>Computer systems organization~Redundancy</concept_desc>
  <concept_significance>300</concept_significance>
 </concept>
 <concept>
  <concept_id>10010520.10010553.10010554</concept_id>
  <concept_desc>Computer systems organization~Robotics</concept_desc>
  <concept_significance>100</concept_significance>
 </concept>
 <concept>
  <concept_id>10003033.10003083.10003095</concept_id>
  <concept_desc>Networks~Network reliability</concept_desc>
  <concept_significance>100</concept_significance>
 </concept>
</ccs2012>
\end{CCSXML}


\keywords{Distributed computing, Arbitrary pattern formation, Rectangular grid, Robot with lights, Optimal algorithms}



\maketitle

\section{Introduction}
Swarm robotics, in the field of distributed systems, has been well studied in the past two decades. Replacing a huge, expensive robot with a set of simple, inexpensive robots is the goal of this field. This makes the system cost-effective, robust, and easily scalable. The robot swarm is usually modelled as a collection of computational entities called robots, which can move. These robots operate in Look-Compute-Move (LCM) cycles. In the Look phase, a robot takes a snapshot of its surroundings as input. This input consists of the positions of other robots with respect to their local coordinate system. In the Compute phase, the robot runs an in-built algorithm to determine a position to move to. In the Move phase, the robot moves to that position. The main research interest has been to investigate what minimal capabilities are needed for these robots to solve a problem. The robots are generally assumed to be anonymous (robots have no unique identifiers), autonomous (there is no central control), homogeneous (all robots execute the same distributed algorithm), identical (the robots are indistinguishable from appearance), and disoriented (the robots do not have access to a global coordinate system). Further, if the robots are oblivious (they have no memory to remember their past actions or past configuration) and silent (they have no explicit means of communication), then this robot model is termed the $\mathcal{OBLOT}$ model. Each robot can be equipped with a finite persistent memory, where it can remember a finite bit. In literature, this model is termed $\mathcal{FSTA}$. Each robot can communicate a finite bit of information to other robots. In literature, this model is termed $\mathcal{FCOM}$. The finite bit memory and finite communicable information are together implemented as a finite number of visible lights that can take a finite number of different colors. A robot with visible lights means the robot can access the color of the lights, and other robots can see the lights, which serves as a communication mechanism. This model is termed $\mathcal{LUMI}$ and the robots are called luminous robots. Based on the timing of the activation of the robots and the execution time of the phases of the LCM cycles, there are three types of schedulers in the literature. In the fully synchronous (\textsc{FSync}) scheduler, all robots operate synchronously, where the time is divided into rounds. All robots simultaneously get activated and execute the phases of the LCM cycle. In a semi-synchronous (\textsc{SSync}) scheduler, a nonempty set of robots gets activated in a round and simultaneously executes the phases of the LCM cycle. Next, in the fully asynchronous (\textsc{ASync}) scheduler, there is no common notion of time among the robots. All robots get activated and execute their LCM cycles independently.

The Arbitrary Pattern Formation (\textsc{Apf}) problem is one of the well-studied problems in the literature. This problem asks the robots to form a geometric pattern that is given to them as input. The input is given as a set of points expressed in cartesian coordinates with respect to a coordinate system. The goal of this problem is to design a distributed algorithm that allows a set of autonomous robots to form a specific but arbitrary geometric pattern given as input. This problem has been studied in both the euclidean plane and grid settings. In this paper, the problem is considered on an infinite rectangular grid for luminous robots in a fully asynchronous scheduler. The earlier solutions for this problem in grid settings did not consider the space required for the solution. Infinite grid setting has theoretical motivation, but practically, one cannot have such a luxury. For a space-constrained application field, we need an algorithm that uses less space. It will help to utilise the given space as optimally as possible. This somewhat ensures less total robot movement as well. Motivated by this, this work proposes an algorithm that solves \textsc{Apf} problem in an infinite grid for asynchronous luminous robots. The required space for the proposed algorithm is optimal, and the total number of moves required by the robots is asymptotically optimal. In the next section, we discuss the related works and contributions of this work.
\section{Related work and Our Contribution}
\subsection{Related Work} 

The arbitrary pattern formation problem has been investigated mainly in two settings: one in the euclidean plane and another in the grid. In the euclidean plane, the problem is mainly studied in \cite{BoseDS21,BramasT16,BramasT18,Cicerone19,DieudonnePV10,FlocchiniPSW08,Suzuki96,YAMASHITA10}. In a grid setting, this problem is first studied in \cite{BOSE2020}. Here, the authors solved the problem deterministically on an infinite rectangular grid with oblivious robots in an asynchronous scheduler. Later in \cite{cicerone20}, the authors studied the problem on a regular tessellation graph. Whereas the algorithm proposed in \cite{BOSE2020,cicerone20} is not move-optimal, i.e., the total number of moves made by the robots is not asymptotically optimal. So in \cite{GGSS22}, the authors provided two deterministic algorithms for solving the problem in an asynchronous scheduler. The first algorithm solves the \textsc{Apf} problem for oblivious robots while keeping the total robot movement asymptotically optimal. The second algorithm solves the problem for luminous robots, and this algorithm is asymptotically move-optimal and time-optimal, i.e., the number of epochs (a time interval in which each robot activates at least once) to complete the algorithm is asymptotically optimal. In \cite{KGGSX22}, the authors provided a deterministic algorithm for solving the problem with opaque point robots with lights in an asynchronous scheduler assuming one-axis agreement. Then in \cite{HSVT22}, the authors proposed two randomised algorithms for solving the \textsc{Apf} problem in an asynchronous scheduler. The first algorithm works for oblivious robots. This algorithm is asymptotically move-optimal and time-optimal. The second algorithm works for luminous robots with obstructed visibility (when robots are not transparent). This algorithm is also move-optimal and time-optimal. In \cite{KGGS2022}, the authors solve the problem with opaque fat robots with lights in an asynchronous scheduler assuming one-axis agreement.

\subsection{Space Complexity of APF Algorithms in Rectangular Grid} 
In all the works mentioned for arbitrary pattern formation problems, finding a solution was the first challenge. Then the work tilted towards finding optimal solutions, considering different aspects. So far, the aspects considered were the total number of moves made by the robots and the total time to solve the problem. None of the mentioned works discussed the \textit{space complexity} (Definition~\ref{def0}) of the solution. In \cite{RHth}, the authors considered space complexity, but they showed their solution is asymptotically space optimal. However, in the mutual visibility problem studied in \cite{ABKS22,SVT2020} asymptotic space complexity has been considered.

\begin{definition}\label{def0}
In a rectangular grid, we define the space complexity of an algorithm as the minimum area of the rectangles (whose sides are parallel with the grid lines) such that no robot steps out of the rectangle throughout the execution of the algorithm.
\end{definition}

\paragraph{\bf Space Complexity of earlier APF algorithms and comparison with proposed algorithm} The work proposed in this paper is not only asymptotically space optimal (as in \cite{RHth}), it is exactly space optimal (Theorem~\ref{th2}). Let the smallest enclosing rectangle (SER), the sides of which are parallel to grid lines, of the initial configuration and pattern configuration formed by the robots, respectively, be $m\times n$ and $m'\times n'$. Then the minimum space required for an algorithm to solve the problem is a rectangle of dimension $M\times N$, where $M=\max\{m, m'\}$ and $N=\max\{n,n'\}$. The deterministic algorithm proposed in this paper has space complexity $M\times N$ if $M\ne N$ and space complexity $(M+1)\times N$ if $M=N$. The robots in this work only use one light that can take three different colors. The algorithm proposed in \cite{BOSE2020} can be modified such that it takes up the same amount of space as the algorithm in this work using luminous robots. But the sole technique of the proposed algorithm in \cite{BOSE2020} is not move-optimal. The algorithm proposed in this work is asymptotically move-optimal. The algorithms proposed in \cite{GGSS22} need the robots to form a compact line. The space complexity of these algorithms is $M^2\times N^2$ in the worst case. To the best of our knowledge, the work that is most closely related to our work is \cite{HSVT22}. The first randomised algorithm, proposed in \cite{HSVT22} for luminous non-transparent robots, tends to use less space than all other existing works at this time. But this work did not discuss its spatial complexity. On investigating this work, it appears prima facie that this algorithm uses at least $(M+2)\times(N+2)$ space to execute the algorithm. The authors also did not count the number of lights and colors required for the robots. With a closer look, we observe that this algorithm uses at least 31 colors. The second randomised algorithm for oblivious robots in \cite{HSVT22} has a space complexity of $30M\times30N$. Further, deterministic APF algorithms proposed in \cite{KGGS2022,KGGSX22} solved it for obstructed visibility. These works also need the robots to form a compact line, hence the space complexity of these algorithms is $M^2\times N^2$ in the worst case.

\paragraph{\bf Why do we need an APF algorithm with Optimal Space Complexity}
So far in the full visibility model (where a robot can see all other robots present in the system), the second proposed algorithm in \cite{HSVT22} is best, as it works for the $\mathcal{OBLOT}$ model and is move-optimal, time-optimal as well. Also, the algorithm is deterministic if the initial configuration is asymmetric (definition of asymmetric configuration in Section~\ref{model}). But if we are provided with a $100\times100$ square grid, then the algorithm fails to solve the APF even if the dimension of the SER of the initial and target patterns is $4\times4$. A similar problem arises for the first proposed algorithm in \cite{GGSS22}. Next, visibility becomes poorer if the robots are far away from each other. All the previous works assumed that robots had infinite visibility. But to maintain such an assumption, the robots must be close enough to each other. This can be guaranteed if the space complexity of the algorithm is sufficiently low.


\paragraph*{\bf Our Contribution} This work presents a deterministic algorithm for solving APF in an infinite rectangular grid, which is space-optimal as well as asymptotically move-optimal for the first time. Precisely, the space complexity for the algorithm is $M\times N$ when $M\ne N$ and $(M+1)\times N$ when $M=N$. And, if $\mathcal{D}=\max\{M,N\}$, then each robot requires $O(\mathcal{D})$ moves. The algorithm can be easily modified to work on a finite grid that has enough space to contain both the initial and target configurations. The robots are asynchronous, luminous, and have one light that can take three different colors. To the best of our knowledge so far, this is a deterministic algorithm that has the least space complexity, optimal move complexity, and uses the least number of colors. 

\section{Model and Problem Statement}\label{model}

\paragraph{\bf Robot} 
The robots are assumed to be identical (indistinguishable from appearance), anonymous (no unique identifier), autonomous (no centralised control), and homogeneous (they execute the same deterministic algorithm). The robots are equipped with technology so that a robot can determine the positions of all other robots using a local coordinate system (chosen by the robot). The robots are modelled as points on an infinite rectangular grid graph embedded on a plane. Initially, robots are positioned on distinct grid nodes. A robot chooses the local coordinate system such that the axes are parallel to the grid lines and the origin is its current position. Robots do not agree on a global coordinate system. The robots do not have a global sense of clockwise direction. A robot can only rest on a grid node. Movements of the robots are restricted to the grid lines, and through a movement, a robot can choose to move to one of its four adjacent grid nodes.

\paragraph{\bf Lights} Each robot is equipped with a light that can take three colors, namely, \texttt{off}, \texttt{head}, and \texttt{tail}. A robot can see another robot's light and its present color. Initially, the light of each robot has the same color, \texttt{off}. The colors work as an internal memory as well as a communication technique.

\paragraph{\bf Look-Compute-Move Cycle} 
Robots operate in Look-Compute-Move (LCM) cycles, which consist of three phases. In the Look phase, a robot takes a snapshot of its surroundings and gets the position and color of the lights of all the robots. We assume that the robots have full, unobstructed visibility. In the Compute phase, the robots run an inbuilt algorithm that takes the information obtained in the Look phase and obtains a color (say, $c$) and a position. The position can be its own or any of its adjacent grid nodes. At the end of the compute phase, the robot changes the color to $c$. In the Move phase, the robot either stays still or moves to the adjacent grid node as determined in the Compute phase.

\paragraph{\bf Scheduler} The robots work asynchronously. There is no common notion of time for robots. Each robot independently gets activated and executes its LCM cycle. In this scheduler, the Compute phase and Move phase of robots take a significant amount of time. The time length of LCM cycles, Compute phases, and Move phases of robots may be different. Even the length of two LCM cycles for one robot may be different. The gap between two consecutive LCM cycles, or the time length of an LCM cycle for a robot, is finite but can be unpredictably long. We consider the activation time and the time taken to complete an LCM cycle to be determined by an adversary. In a fair adversarial scheduler, a robot gets activated infinitely often.

\paragraph{\bf Grid Terrain and Configurations} Let $\mathcal{G}$ be an infinite rectangular grid graph embedded on $\mathbb{R}^2$. $\mathcal{G}$ can be formally defined as a geometric graph embedded on a plane as $\mathcal{P}\times \mathcal{P}$, which is the cartesian product of two infinite (from both ends) path graphs $\mathcal{P}$. Suppose a set of robots is placed on $\mathcal{G}$. Let $f$ be a function from the set of vertices of $\mathcal{G}$ to $\mathbb{N}\cup\{0\}$, where $f(v)$ is the number of robots on the vertex $v$ of $\mathcal{G}$. Let $g$ be a function from the set of edges of $\mathcal{G}$ to $\mathbb{N}\cup\{0\}$, where $g(e)$ is the number of robots on the edge $e$ of $\mathcal{G}$. Then the pair $(\mathcal{G},f,g)$ is said to be a \textit{configuration} of robots on $\mathcal{G}$. We assume for the initial configuration $(\mathcal{G},f,g)$, $f(v)=0 \text{ or } 1$ for all nodes $v$ in $\mathcal{G}$ and $g(e)=0$ for all edges $e$. If for a configuration $(\mathcal{G},f,g)$, $g(e)=0$ for all edges $e$, then we call it a \textit{still} configuration. Since for a still configuration $(\mathcal{G},f,g)$, $g$ is fixed, we denote a still configuration as $(\mathcal{G},f)$.

\paragraph{\bf Symmetries} Let $(\mathcal{G},f)$ be a still configuration. A \textit{symmetry} of $(\mathcal{G},f)$ is an automorphism $\phi$ of the graph $\mathcal{G}$ such that $f(v)=f(\phi(v))$ for each node $v$ of $\mathcal{G}$. A symmetry $\phi$ of $(\mathcal{G},f)$ is called \textit{trivial} if $\phi$ is an identity map. If there is no non-trivial symmetry of $(\mathcal{G},f)$, then the still configuration $(\mathcal{G},f)$ is called a \textit{asymmetric} configuration and otherwise a \textit{symmetric} configuration. Note that any automorphism of $\mathcal{G}=\mathcal{P}\times \mathcal{P}$ can be generated by three types of automorphisms, which are translations, rotations, and reflections. Since there are only a finite number of robots, it can be shown that $(\mathcal{G},f)$ cannot have any translation symmetry. Reflections can be defined by an axis of reflection that can be horizontal, vertical, or diagonal. The angle of rotation can be of $90^{\circ}$ or $180^{\circ}$, and the centre of rotation can be a grid node, the midpoint of an edge, or the centre of a unit square. We assume the initial configuration to be asymmetric. The necessity of this assumption is discussed after the problem statement.

\paragraph*{\bf Problem Statement}
Suppose a swarm of robots is placed in an infinite rectangle grid such that no two robots are on the same grid node and the configuration formed by the robots is asymmetric. The Arbitrary Pattern Formation (\textsc{Apf}) problem asks to design a distributed deterministic algorithm following which the robots autonomously can form any arbitrary but specific (target) pattern, which is provided to the robots as an input, without scaling it. The target pattern is given to the robots as a set of vertices in the grid with respect to a cartesian coordinate system. We assume that the number of vertices in the target pattern is the same as the number of robots present in the configuration. The pattern is considered to be formed if the present configuration is a still configuration and is the same up to translations, rotations, and reflections. The algorithm should be \textit{collision-free}, i.e., no two robots should occupy the same node at any time, and two robots must not cross each other through the same edge.  

\paragraph*{\bf Admissible Initial configurations} We assume that in the initial configuration there is no multiplicity point, i.e., no grid node that is occupied by multiple robots. This assumption is necessary because all robots run the same deterministic algorithm, and two robots located at the same point have the same view. Thus, it is deterministically impossible to separate them afterwards. Next, suppose the initial configuration has a reflectional symmetry with no robot on the axis of symmetry or a rotational symmetry with no robot on the point of rotation. Then it can be shown that no deterministic algorithm can form an asymmetric target configuration from this initial configuration. However, if the initial configuration has reflectional symmetry with some robots on the axis of symmetry or rotational symmetry with a robot at the point of rotation, then symmetry may be broken by a specific move of such robots. But making such a move may not be very easy as the robots' moves are restricted to their adjacent grid nodes only. In this work, we assume the initial configuration to be asymmetric.

\section{The Proposed Algorithm}
This section gives the proposed algorithm \textsc{ApfMinSpace}. We assume that the initial configuration formed by the robots is asymmetric and that all of the robots' lights have the color \texttt{off}.

\paragraph{\bf Overview and Key-point of the proposed algorithm}
The proposed algorithm first elects two leader robots that are used to fix the global coordinate system throughout the algorithm. One of the leaders moves to create enough room (if required) for the algorithm to successfully execute. Then non-leader robots move vertically so that each horizontal line contains exactly the number of robots required according to the embedding of the target pattern. Then non-leader robots make horizontal moves to take their respective target positions. Finally, leaders take their respective target positions. Interestingly, with luminous robots, it is not that hard to propose an algorithm that takes minimal space, if we see the technique proposed in \cite{BOSE2020}. But here the algorithm \textsc{ApfMinSpace} takes special care to make the robot move optimally in an optimally bounded space (Fig.~\ref{fig:locus} shows the general locus of any non-leader robot), which leads to making the algorithm space-optimal as well as asymptotically move-optimal.
For a detailed overview of the proposed algorithm, see Appendix.

\subsection{Preliminaries of the Proposed Algorithm}
First, we describe a procedure named Procedure I, which can be executed by a robot if the configuration made by the robot is still and asymmetric. This procedure is used to fix a global coordinate system regardless of the light color of the robots by electing two leaders, namely, head and tail.

\textbf{\textsc{Procedure I:}}

\textit{Assumption}: The current configuration is still and asymmetric.

\textit{Description}: Let $\mathcal{C}=(\mathcal{G},f)$ be the current configuration. Compute the smallest enclosing rectangle (SER) containing all the robots where the sides of the rectangle are parallel to the grid lines. Let $\mathcal{R}=ABCD$ be the SER of the configuration, a $m\times n$ rectangle with $|AB|=n\ge m=|AD|$. The length of the sides of $\mathcal{R}$ is considered the number of grid points on that side. If all the robots are on a grid line, then $R$ is just a line segment. In this case, $R$ is considered a $1\times n$ `rectangle' with $A=D$, $B=C$, and $AD=BC=1$. Let $n>m>1$, that is, $\mathcal{R}$ be a non-square rectangle. For each corner point $A$, $B$, $C$, and $D$ the robot calculates a binary string. For the corner point $A$, the binary string is determined as follows: Scan the grid from $A$ along the longer side $AB$ to $B$ and sequentially all grid lines parallel to $AB$ in the same direction. For each grid point, put a 0 or 1 according to whether it is empty or occupied by a robot. We denote the string as $\lambda_{AB}$ (see Fig.~\ref{fig:lexi}).
\begin{figure}[h!]
    \centering
    \includegraphics[width=.5\linewidth]{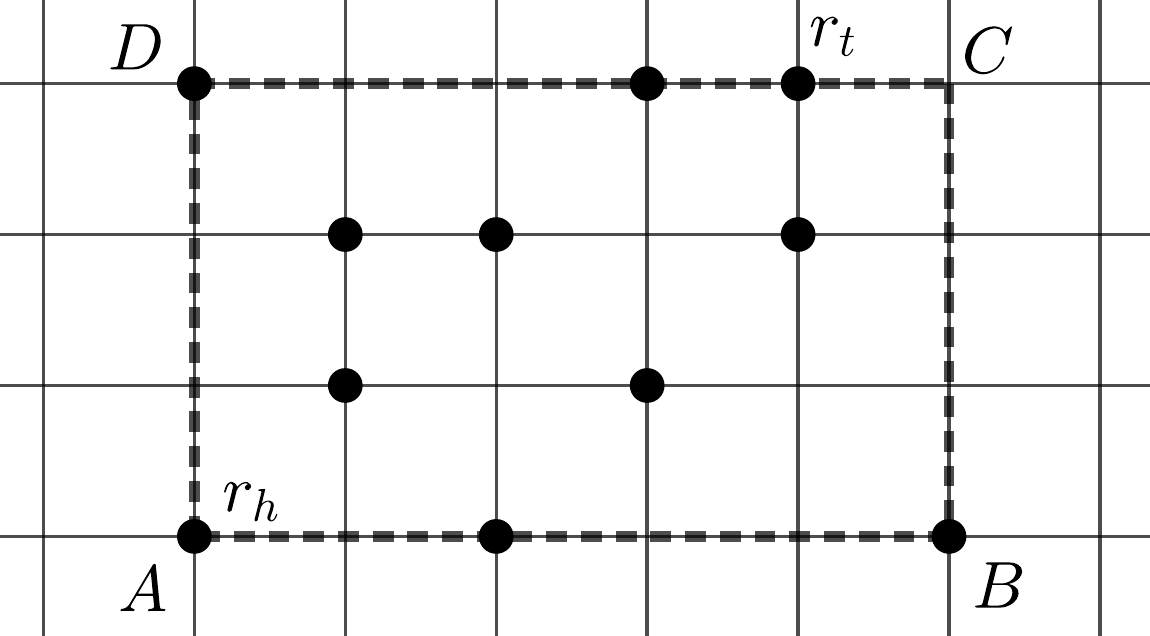}
    \caption{$\lambda_{AB}=101001010100011010100110$ is the largest lexicographic string, and $r_h$ and $r_t$ are respectively the head and tail robots of the configuration. }
    \label{fig:lexi}
\end{figure}
Similarly, the robot calculates the other three strings: $\lambda_{BA}$, $\lambda_{CD}$ and $\lambda_{DC}$. If $\mathcal{R}$ is a square, that is, $m=n$, then we have to associate two strings to each corner. Then we have eight binary strings $\lambda_{AB}$, $\lambda_{BA}$, $\lambda_{AD}$, $\lambda_{DA}$, $\lambda_{BC}$, $\lambda_{CB}$, $\lambda_{DC}$ and $\lambda_{CD}$. Since the configuration is asymmetric, all the strings are distinct. The robot finds the unique lexicographically largest string. Let $\lambda_{AB}$ be the lexicographically largest string, and then $A$ is considered the \textit{leading corner} of the configuration. The leading corner is taken as the origin, and $\overrightarrow{AB}$ is as the $x$ axis and $\overrightarrow{AD}$ is as the $y$ axis. If the $\mathcal{R}$ is a $1\times n$ rectangle, then there are only two associated binary strings: $\lambda_{AB}$ and $\lambda_{BA}$. If both are equal, then the configuration is symmetric. Since the configuration is asymmetric, the strings are distinct. Let $\lambda_{AB}$ be the lexicographically largest string. Then $A$ is considered the origin, and $\overrightarrow{AB}$ is considered the $x$ axis. In this case, there is no common agreement on the $Y$ axis.
In all the cases, a unique string, say $\lambda_{AB}$ is elected. The robot responsible for the first 1 in this string is considered the $head$ robot of $\mathcal{C}$ and the robot responsible for the last 1 is considered the $tail$ of $\mathcal{C}$. The robot other than the head and tail is termed the \textit{inner robot}.

\paragraph*{\bf General definitions of head and tail robots} Let us formally state the definitions of \textit{head} and \textit{tail} robots. Earlier in Procedure I, we defined the head and tail robots. If there is no robot with \texttt{head} color on and the configuration is still and asymmetric, then that definition is applicable. If there is a robot with the light color \texttt{head} and another robot with the light color \texttt{tail}, then the robot with the \texttt{head} color on is said to be the head robot, and the robot with the \texttt{tail} color on is said to be a tail robot. 

Next, we describe another procedure, Procedure~II which directs robots to fix a global coordinate system when there are two robots with respective light colors \texttt{head} and \texttt{tail}.

\textbf{\textsc{Procedure II:}}

\textit{Assumption}: There are two robots with respective light colors \texttt{head} and \texttt{tail}. The head is at a corner of the current SER.

\textit{Description}: Let SER of the configuration be a rectangle $ABCD$ with $|AB|\ge|BC|$ and the head robot situated at $A$. There are three exhaustive cases.
\begin{itemize}
 \item \textit{Case-I}: If $ABCD$ is a non-square rectangle with $|BC|>1$ or if $ABCD$ is a square and the tail robot is on the $CD$ edge but not at $C$, then consider $A$ as the origin, $\overrightarrow{AB}$ as the $x$ axis, and $\overrightarrow{AD}$ as the $y$ axis.
 \item \textit{Case-II}: If $ABCD$ is a square rectangle and the tail robot is at $C$, then consider $A$ as the origin, and there are two possibilities for considering axes. Firstly, it can be done by considering $\overrightarrow{AB}$ as the $x$ axis and $\overrightarrow{AD}$ as $y$ axis. Secondly, it can be done by considering $\overrightarrow{AD}$ as $x$ axis and $\overrightarrow{AB}$ as $y$ axis.
 \item \textit{Case-III}: If the SER of the configuration is a line $AB$ where the head robot is situated at $A$ and the tail robot is situated at $B$, then consider $A$ as the origin and $\overrightarrow{AB}$ as the $x$ axis. The $y$ axis can be considered in either of the two possible ways.
\end{itemize}

\paragraph{\bf Target embedding} Here we discuss how robots are supposed to embed the target pattern when they agree on a global coordinate system. Let the $\mathcal{R}'=A'B'C'D'$ be the SER of the target pattern, an $m'\times n'$ rectangle with $|A'B'|\ge |B'C'|\ge1$. We associate binary strings similarly for $\mathcal{R}'$. Let $\lambda_{A'B'}$ be the lexicographically largest (but may not be unique) among all other strings for $\mathcal{R}'$. The first target position on this string $\lambda_{A'B'}$ is said to be \textit{head-target} and denoted as $h_{target}$ and the last target position is said to be \textit{tail-target} and denoted as $t_{target}$. The rest of the target positions are called \textit{inner target} positions. Then the target pattern is to be formed such that $A'$ is the origin, $\overrightarrow{A'B'}$ direction is along the positive $x$ axis, and $\overrightarrow{A'D'}$ direction is along the positive $y$ axis. Let the SER of the target pattern be a line $A'B'$, and let $\lambda_{A'B'}$ be the lexicographically largest string between $\lambda_{A'B'}$ and $\lambda_{B'A'}$. Then the target is embedded in such a way that $A'$ is at the origin and $\overrightarrow{A'B'}$ direction is along the positive $x$ axis.

Let us define some notations. Let $\mathcal{C'}=\mathcal{C}\setminus\{\text{head}\}$ and $\mathcal{C''}=\mathcal{C}\setminus\{\text{head, tail}\}$, where $\mathcal{C}$ denotes any configuration. Let $\mathcal{C}_{target}'=\mathcal{C}_{target}\setminus\{h_{target}\}$, $\mathcal{C}_{target}''=\mathcal{C}_{target}\setminus\{h_{target},t_{target}\}$ where $\mathcal{C}_{target}$ is the target configuration. Let the dimension of the SER of the current configuration be $m\times n$ with $m\le n$, and the dimension of the SER of the target configuration be $m'\times n'$ with $m'\le n'$. If $m\ge m'$ and $n\ge n'$, then the current SER can contain the target pattern. Next, list some sets of conditions in Table~\ref{tab:0}. Note that a robot can verify these conditions from the current configuration after embedding the target.


    

\begin{table}[h!]
    \centering
    \footnotesize
     \caption{List of conditions}
    \label{tab:0}
    \begin{tabular}{|p{1cm}|p{6.5cm}|}
        \hline
        $C_{asym}$& $\mathcal{C}$ is still and asymmetric\\
        \hline
        $C_{final}$  & $\mathcal{C}=\mathcal{C}_{target}$ (final target configuration is achieved) \\
        \hline
        $\overline{C}_{h}$  & $\mathcal{C}'=\mathcal{C}_{target}'$ (all target positions are occupied except head-target)  \\
        \hline
        $C_{inner}$  & $\mathcal{C}''=\mathcal{C}_{target}''$ (all inner target positions are occupied)\\
        \hline
        $C_{lumi}$ & There is a robot with a light color \texttt{head} at the corner of the SER of $\mathcal{C}$ and there is a robot with a light color \texttt{tail} \\
        \hline
        $C_{corner}$ & The tail robot is at a corner point of the SER of $\mathcal{C}$\\
        \hline
        $C_{enough}$ & The SER of $\mathcal{C}$ can contain the target pattern\\
        \hline
        $C_{rect}$ & The SER of $\mathcal{C}$ is a non-square rectangle\\
        \hline
    \end{tabular}

 \end{table}

\subsection{Algorithm \textsc{ApfMinSpace}}
Let us formally describe the algorithm \textsc{ApfMinSpace} in Algorithm~\ref{algo:main}. A generic robot $r$, having the light color \texttt{off} initially, runs this algorithm. The algorithm is composed of two main phases. One of them is named phase~\textsc{Lumi}. 
\begin{algorithm}[h!]
\footnotesize
\caption{\textsc{\footnotesize ApfMinSpace}}\label{algo:main} 
    \eIf{$C_{final}$ is true}
    {
        \If{color of robot $r$ is not \texttt{off}}
        {
            robot $r$ changes the color to \texttt{off}\;
        }
    }
    {
        \uIf{$C_{asym}\land \neg C_{lumi}$ is true}
        {
            run phase~\textsc{nonLumi}\;
        }
        \ElseIf{$C_{lumi}$ is a true}
        {
            run phase~\textsc{Lumi}\;
        }
    }
    \end{algorithm}
If $C_{final}$ is not true, a robot infers itself in this phase if $C_{lumi}$ is true. Another phase is named as phase~\textsc{nonLumi}. If $C_{final}$ is not true, a robot infers itself in this phase if $C_{lumi}$ is not true and $C_{asym}$ is true. If $C_{final}$ is true, then a robot changes its light color to \texttt{off}, if it is not already. Next, we describe the phases one by one.

\paragraph*{\textbf{Phase}~\textsc{nonLumi}}~

\textit{Assumption:} $C_{asym}\wedge \neg C_{lumi}$ is true.

\textit{Goal:} $C_{final}\lor (C_{lumi} \wedge \neg \overline{C}_h)$ is true.

\textit{Description}:
Run procedure~I and determine the global coordinate system. If $\overline{C}_h$ is true, then the head goes to the left (right) if the head target is its left (right). If $\overline{C}_h$ is not true, then the tail robot first changes its color to \texttt{tail}. If there is a robot with the color \texttt{tail}, then the head robot starts moving towards the left to reach the origin. When the head robot reaches an adjacent node of the origin, it changes its color to \texttt{head} and moves to the origin. If the head robot is already at origin with color \texttt{off}, then it changes its color to \texttt{head}. The pseudo-code of this phase is given in Algorithm~\ref{algo:ph1}.

\begin{algorithm}[h!]
\footnotesize
\caption{\footnotesize Phase~\textsc{nonLumi}}\label{algo:ph1}   
    run Procedure~I\;
    \eIf{$\overline{C}_h$ is false}
        { 
            \eIf{there is no robot with color \texttt{tail}}
            {
                tail robot changes its color to \texttt{tail}\;
            }
            {
                \uIf{head is at origin with color \texttt{off}}
                {head changes its color to \texttt{head}\;}
                \ElseIf{head is not at origin}
                {
                    \eIf{head is adjacent to origin}
                    {
                        head changes its color to \texttt{head} and moves left\;
                    }
                    {
                        head moves left\;
                    }
                }
            }
        }
        {
            
            \eIf{$h_{target}$ is at left}
            {
                head robot moves left\;
            }
            {
                head robot moves right\;
            }
            
        }

\end{algorithm}

\paragraph*{\textbf{Phase}~\textsc{Lumi}}~

\textit{Assumption:} $C_{lumi}$ is true.

\textit{Goal:} $\overline{C}_h\land \neg C_{lumi}$ is true.

\textit{Description}: In this phase, if in the snapshot, the tail robot is seen on the edge, discard the snapshot and go to sleep. Otherwise, run Procedure~II. If considering the coordinate system through case-I or case-III, or any of the coordinate systems through case-II, $C_{inner}$ is true, then the tail moves towards the $t_{target}$ by first moving downwards and then leftwards. When the tail reaches $t_{target}$, head changes its color to \texttt{off} and goes to the right. If $C_{inner}$ is not true considering any of the coordinate systems through Procedure~II, and $C_{corner}$ is false, then the tail robot moves right. If $\neg C_{inner} \wedge C_{corner}$ is true, then there are two possibilities: either $C_{enough}$ is true or not. If $C_{enough}$ is not true, then the tail robot expands the SER to fit the target pattern. If $\neg C_{inner} \wedge C_{corner} \wedge C_{enough}$ is true but $C_{rect}$ is false, then the tail robot moves outside the SER. Finally, when $\neg C_{inner} \wedge C_{corner} \wedge C_{enough} \wedge C_{rect}$ is true, then call the function \texttt{Rearrange}() (this function is described next). The pseudo-code of this phase is given in Algorithm~\ref{algo:ph3}.

\begin{algorithm}[h!]
\footnotesize
\caption{\footnotesize Phase~\textsc{Lumi}}\label{algo:ph3}
\eIf{the tail robot is on an edge}
{   
    do nothing\;   
}
{
    run Procedure~II\;
    \eIf{$C_{inner}$ is true}
    {
        \eIf{$\overline{C}_h$ is false}
        {
            tail robot moves toward $t_{target}$\;
        }
        {
            head robots changes its color to \texttt{off} and moves right\;
        }
    }
    {
        \eIf{$C_{corner}$ is false}
        {
            tail robot moves to right\;
        }
        {
            \eIf{$C_{enough}$ is false}
            {
                tail robot move to expand the SER\;
            }
            {
                \eIf{$C_{rect}$ is false}
                {
                    tail moves outside the SER\;
                }
                {
                    call function \texttt{Rearrage}()
                }
            }
        }
    }
}
\end{algorithm}



\paragraph*{Function \texttt{Rearrange}()}~

\textit{Assumption:} $C_{lumi}\land\neg C_{inner} \wedge C_{corner} \wedge C_{enough} \wedge C_{rect}$ is true.

\textit{Goal}: $C_{inner}\wedge C_{lumi}$ is true.

\textit{Description}: Let us name the grid lines parallel to the $x$ axis (we shall call them horizontal grid lines): $H_1, H_2, \dots$, from bottom to top, where $H_1$ is the horizontal line that contains the head robot. Let $a'(i)$ $(b'(i))$ be the total number of target positions in $\mathcal{C}_{target}'$ above (below) the $H_i$ horizontal line. Let $a(i)$ $(b(i))$ be the total number of robots in $\mathcal{C}''$ above (below) the $H_i$ horizontal line. We say a horizontal line $H_i$ satisfies \textit{upward} condition if:
\begin{enumerate}
    \item[(U1)] $a'(i)>a(i)$,
    \item[(U2)] [$a'(i+1)>a(i+1)$ \textbf{and} $H_{i+1}$ is empty] \textbf{or} [$a'(i+1)= a(i+1)$].
\end{enumerate}
Next, we say a horizontal line $H_i$ satisfies \textit{downward} condition if:
\begin{enumerate}
    \item[(D1)] $b'(i)>b(i)$,
    \item[(D2)] [$H_{i-1}$ is empty]
    \textbf{or} [$b'(i-1)\le b(i-1)$].
\end{enumerate}
A horizontal line $H_i$ is said to be $saturated$ if $a(i)=a'(i)$ and $b(i)=b'(i)$ (See the Fig.~\ref{fig:example}). In this function, inner robots 
move to make $C_{inner}$ true. The algorithm is given below in different cases for a robot $r$ on the horizontal line $H_i$ (for an intuition behind this function, see the Appendix).
\begin{figure}[h!]
    \centering
    \includegraphics[width=.4\textwidth]{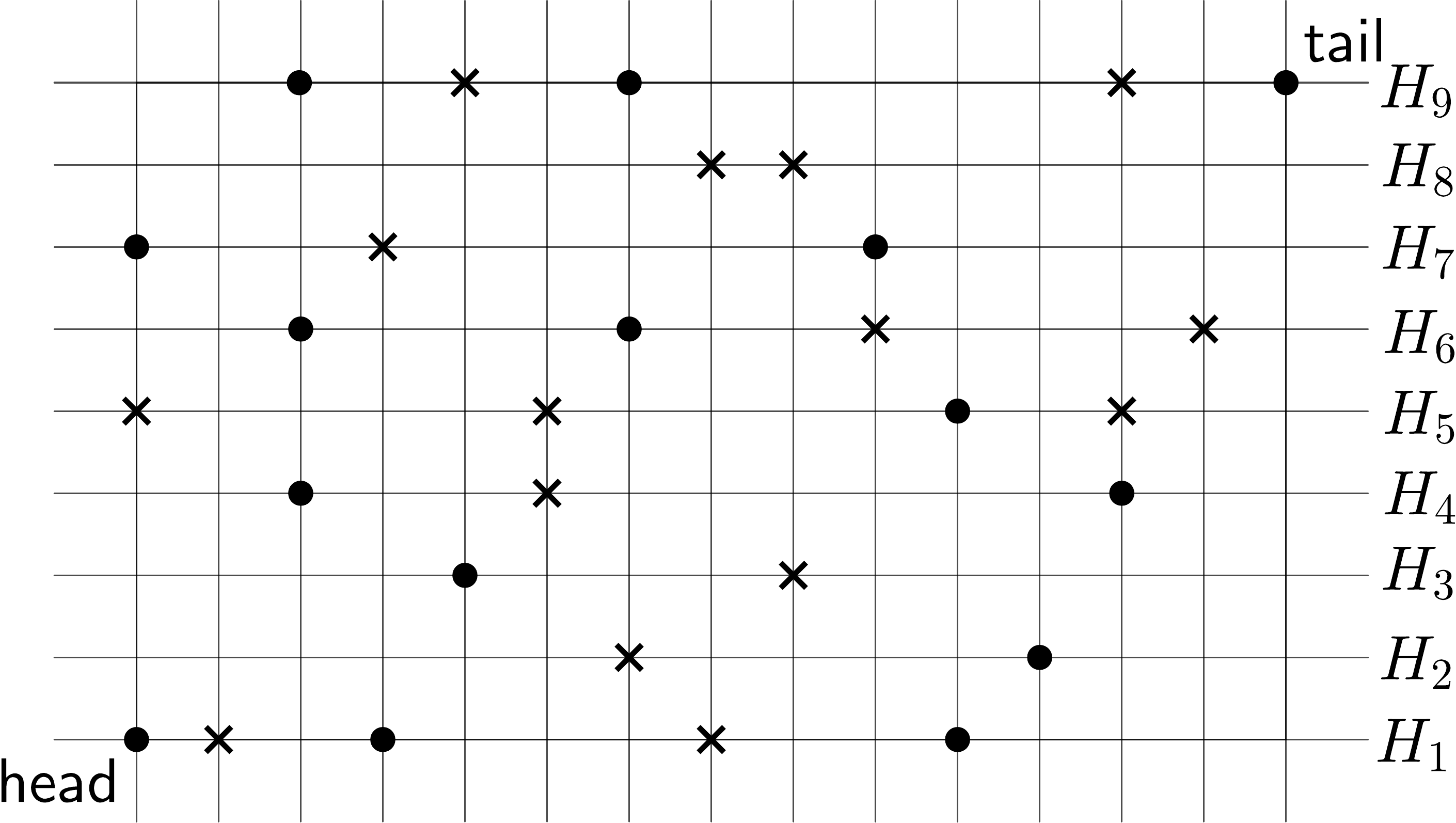}
    \caption{ For this configuration, $b(4)=4, b'(4)=3$ and $a(4)=7, a'(4)=9$; $H_4$ horizontal lines satisfy upward conditions, $H_9$ horizontal lines satisfy downward conditions, and $H_6$ is a saturated horizontal line. Crosses ($\times$) denote target positions.}
    \label{fig:example}
\end{figure}

\textit{Case-I} ($H_i$ satisfies the upward condition but not the downward condition)\\ 
If $H_{i+1}$ is empty, then the leftmost robot on $H_i$ goes upward. If $H_{i+1}$ is nonempty and there is a robot on $H_i$ that has its upward node empty, then the leftmost such robot goes upward. If there are no robots on $H_i$ that have its upward node empty, then consider the leftmost empty node, say, $v$ on $H_{i+1}$ in the SER. The closest robot on $H_i$ from $v$ moves to $v$. If there are two closest robots from $v$ on $H_i$, then the left one moves to $v$.


\textit{Case-II} ($H_i$ satisfies the downward condition)\\ If $H_{i-1}$ is empty, then the rightmost robot on $H_i$ goes downward. If $H_{i-1}$ is nonempty and there is a robot on $H_i$ that has its downward node empty, then the rightmost such robot goes downward. If there are no robots that have their downward node empty, then consider the rightmost empty node, say, $v$ on $H_{i-1}$ in the SER. The closest robot on $H_i$ from $v$ moves to $v$. If there are two closest robots from $v$ on $H_i$, then the right one moves to $v$.



\textit{Case-III} ($H_i$ is a saturated)\\If $r$ is the $j^{th}$ robot from left on $H_i$, then consider the $j^{th}$ target position on $H_i$ from the left at a node, say $t_j$. If the $t_j$ node is at the left (right) of $r$ and the left (right) neighbor node of $r$ is empty, then move left (right).



In the next section, we prove the correctness of the proposed algorithm. The target of the algorithm is to achieve $C_{final}$.

\section{Correctness of the Proposed Algorithm}
We start with the correctness of the three phases. The proofs are omitted from the main paper due to space constraints and provided in the Appendix.
\begin{lemma}[Correctness of phase~\textsc{nonLumi}]\label{lm1}
If we have a configuration $\mathcal{C}$ in phase~\textsc{nonLumi} at some time $t$, then the following hold true:
\begin{enumerate}
 \item If $\overline{C}_h$ is true at time $t$, then after a finite time $C_{final}$ becomes true.
 \item If $\overline{C}_h$ is not true at time $t$, then after a finite time $\neg \overline{C}_h\land C_{lumi}$ becomes true.
\end{enumerate}

\end{lemma}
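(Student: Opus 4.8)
The plan is to prove the two parts separately, using throughout that the phase assumption $C_{asym}\wedge\neg C_{lumi}$ makes Procedure~I applicable, so it deterministically fixes a global coordinate system and elects a head robot and a tail robot; being responsible for the first $1$ of the elected string $\lambda_{AB}$, the head robot lies on the bottom side of the SER. Since $C_{final}$ is false throughout Phase~\textsc{nonLumi}, every activated robot runs Algorithm~\ref{algo:ph1}; inspecting it, the only robot that ever moves is the head robot, while a robot that is neither head nor tail only ever performs a no-op. Hence in any still configuration reached during the phase only the head changes position, so the only robot ever found mid-edge is the head; at such an instant the configuration is not still ($C_{asym}$ false), it is not $\mathcal{C}_{target}$ ($C_{final}$ false), and the \texttt{head}-colored robot, if any, is not at an SER corner ($C_{lumi}$ false), so every activated robot does nothing. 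Consequently the set $\mathcal{C}'$ of non-head positions stays constant, and the truth value of $\overline{C}_h$ is an invariant of the phase.

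For part~(1), assume $\overline{C}_h$ holds at $t$. Since $C_{final}$ is false, the head occupies a node $p\notin\mathcal{C}_{target}$ while all other robots occupy exactly $\mathcal{C}_{target}'$; as $p$ is the first $1$ of $\lambda_{AB}$, it is the leftmost occupied node of the bottom row of the SER of $\mathcal{C}$. I would then establish the structural claim that, in this situation, the coordinate system and SER returned by Procedure~I coincide with those used for the target embedding, so that $h_{target}$ is the unique leftmost target position on the bottom row. Since no target position lies on the bottom row strictly between $p$ and $h_{target}$ and $\mathcal{C}'=\mathcal{C}_{target}'$, no robot lies there either; hence the horizontal move toward $h_{target}$ prescribed by Algorithm~\ref{algo:ph1} is collision-free and strictly decreases the head's horizontal distance to $h_{target}$. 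By the invariance of $\overline{C}_h$ the head keeps executing this branch, so by fairness it reaches $h_{target}$ after finitely many activations, whence $\mathcal{C}=\mathcal{C}_{target}$, i.e.\ $C_{final}$ holds; it then holds forever, since the only enabled action becomes recoloring to \texttt{off}.

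For part~(2), assume $\overline{C}_h$ is false at $t$; by the first paragraph it stays false. I would follow the three stages of Algorithm~\ref{algo:ph1}. (i) If no robot carries color \texttt{tail}, the first activation of the tail robot recolors it to \texttt{tail}; no rule of the phase removes this color and no position changes, so we remain in Phase~\textsc{nonLumi}. (ii) Once a \texttt{tail}-colored robot exists, each activation of the head robot moves it one step left along the bottom row toward the origin $A=(0,0)$; this path is empty because the head is the leftmost robot of the bottom row, and since the head moves toward the leading corner, $\lambda_{AB}$ increases lexicographically and, one shows, remains the unique maximum, so the configuration stays asymmetric and Procedure~I keeps electing the same leading corner and the same head and tail. (iii) When the head is adjacent to $A$ it recolors to \texttt{head} and steps onto $A$ (or, if already at $A$ with color \texttt{off}, it just recolors). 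Then a \texttt{head}-colored robot sits at the SER corner $A$ and a \texttt{tail}-colored robot exists, so $C_{lumi}$ holds, which together with $\neg\overline{C}_h$ is the claim. Finiteness follows from fairness and the strict decrease of the head's $x$-coordinate until it reaches $0$, plus at most one extra activation each for stages~(i) and~(iii).

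The main obstacle is the collection of structural invariants invoked above: in part~(2), that moving the head toward the leading corner keeps $\lambda_{AB}$ the unique lexicographically largest string (hence preserves asymmetry, the leading corner, and the identities of head and tail); and, more delicately, in part~(1), that as the head moves toward $h_{target}$ the SER, the elected coordinate system, the leading corner, the embedded points $h_{target}$ and $t_{target}$, and all inner targets stay fixed, and every still intermediate configuration remains asymmetric, so that Phase~\textsc{nonLumi} does not stall. Part~(1) is subtler because the head may have to move \emph{away} from the leading corner, so $\lambda_{AB}$ can decrease; showing that a single robot displaced from $h_{target}$ to an extreme node of the bottom row of the SER of $\mathcal{C}_{target}$ cannot, at any intermediate position, induce a non-trivial automorphism is the technical heart of the proof, and is where the asymmetry of the input data must be used with care.
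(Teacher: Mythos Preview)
Your plan is correct and follows essentially the same route as the paper's own proof: only the head moves in this phase, one argues that each head move preserves asymmetry and the elected coordinate system (hence the identities of head and tail and the truth value of $\overline{C}_h$), and termination then follows from the strictly decreasing distance of the head to $h_{target}$ in part~(1) and to the origin in part~(2). Your treatment of asynchrony and of mid-edge instants is in fact more careful than the paper's.

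The one place where your write-up diverges from the paper is in how much weight you put on the ``main obstacle''. The paper does not treat the asymmetry-preservation step as a delicate automorphism analysis; it dispatches it with a short bit-string argument. For a left move: if the first $1$ of $\lambda_{AB}$ was at position $i$, it is now at position $i-1$; since $\lambda_{AB}$ was strictly largest, every other associated string already had $0$ at positions $1,\dots,i-1$ and the head's displacement does not introduce a $1$ at those positions in any other scan order, so $\lambda_{AB}$ remains strictly largest. For a right move in part~(1): since all non-head robots sit at target positions and the head remains strictly left of $h_{target}$, the first $1$ of the new $\lambda_{AB}$ is at some position $j<i$ (where $i$ is the position of $h_{target}$ in $\lambda^{target}_{AB}$), while every other string has its first $1$ at a position $\ge i$; hence $\lambda_{AB}$ is again strictly largest. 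So the ``technical heart'' you flag is a two-line lexicographic comparison rather than a subtle symmetry argument; once you insert it, your proof and the paper's coincide.
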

\begin{proof}
Suppose at time $t$, the configuration is $\mathcal{C}$ in phase~\textsc{nonLumi}. Then $\mathcal{C}$ satisfies $\neg C_{lumi}$ and $C_{asym}$. Then $\mathcal{C}$ is a still and an asymmetric configuration. In this phase, the head robot, selected through Procedure~I, only moves.

(1)~Suppose $\overline{C}_h$ is true at time $t$. In this case, the head moves towards $h_{target}$. If after one move the head reaches $h_{target}$, then $C_{final}$ becomes true and we are done. So, we assume that after one move, the head does not reach $h_{target}$. Let $\mathcal{C}_h$ be the configuration after the move of the head. To show that $\mathcal{C}_h$ is asymmetric and the coordinate system remains unchanged. Suppose at time $t$, $ABCD$ is the SER of the configuration such that $\lambda^{old}_{AB}$ is the lexicographically strictly largest string. There are two possible cases. The $h_{target}$ is either at the left or at the right of the current position of the head.
 
 First, suppose $h_{target}$ is at the left of the current position of head, then head moves left. Since $h_{target}$ is on the $AB$ line segment, if after the move, the head reaches at $A$, then $h_{target}$ is at $A$. This contradicts our assumption, so after the move, the new position of the head is not at $A$. If $i^{th}$ term of $\lambda^{old}_{AB}$ is the first nonzero term, then $(i-1)^{th}$ term in $\lambda_{AB}^{new}$ is 1. But the $(i-1)^{th}$ term in all other considered strings is zero in $\mathcal{C}_h$. Thus, $\lambda_{AB}^{new}$ is the strictly largest in $\mathcal{C}_h$. Thus, $\mathcal{C}_h$ remains asymmetric and the coordinate system remains unchanged.
 
 Next, suppose $h_{target}$ is at the right of the current position of the head. According to the target embedding, the SER of the embedded target pattern should also be $ABCD$ with $\lambda^{target}_{AB}$ as a lexicographically largest string, and the position of the $h_{target}$ is the first 1 in $\lambda^{target}_{AB}$. Let $i^{th}$ term of the $\lambda^{target}_{AB}$ in the target embedding be the first nonzero term. Until the head reaches $h_{target}$, it remains at the left of the $h_{target}$. Suppose $j^{th}$ term of $\lambda^{new}_{AB}$ in $\mathcal{C}_h$ is the first nonzero term. Then $j<i$. Since except head, each robot is occupying their respective target positions, so $\lambda^{new}_{AB}$ strictly larger string in $\mathcal{C}_h$. Thus, $\mathcal{C}_h$ remains asymmetric and the coordinate system remains unchanged.
 
 After each head movement towards $h_{target}$ the distance between them decreases. Therefore, after a finite time, the head reaches at $h_{target}$ and $C_{final}$ becomes true.
 
(2)~Suppose $\overline{C}_h$ is false at time $t$. In this case, the head robot moves left until it reaches the origin. But first, the tail robot is elected through procedure~I changes its color to \texttt{tail} on activation if it is not already. Let $\mathcal{C}_h$ be the configuration after the move of the head. Let $r$ be the head robot at time $t$. Let $ABCD$ be the SER of the configuration at time $t$ with $\lambda_{AB}$ as strictly largest string. Suppose, by making one move towards the left, the head does not reach the corner of the SER. We already proved in (i) that if the head moves left but does not reach a corner, the configuration remains asymmetric and the coordinate system remains unchanged. Hence, after a finite time, the head will reach the adjacent node of $A$ on the $AB$ line segment. Next time, when the head gets activated, it changes its color to \texttt{head} and moves to $A$. This makes $\neg \overline{C}_h\land C_{lumi}=$ true. 
\end{proof}


\begin{lemma}[Correctness of \texttt{Rearrange}()]\label{lemmaR}
    If we have a configuration $\mathcal{C}$ that satisfies $C_{lumi}\land\neg C_{inner} \wedge C_{corner} \wedge C_{enough} \wedge C_{rect}$ is true at some time, then through out the execution of the function \texttt{Rearrange} the coordinate system does not change and after a finite time $C_{inner}\land C_{lumi}$ becomes true.
\end{lemma}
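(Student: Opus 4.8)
The plan is to establish three facts in turn: that the coordinate system fixed by Procedure~II never changes during \texttt{Rearrange}, that the ``vertical'' moves of Cases~I and~II drive the configuration in finitely many steps to one in which every horizontal line is \emph{saturated}, and that once this holds only Case~III moves remain and they reach $C_{inner}$ in finite time. For the coordinate system: under $\neg C_{inner}\wedge C_{corner}\wedge C_{enough}\wedge C_{rect}$ the SER $ABCD$ is a non-square rectangle, the head is at a corner (say $A$) with colour \texttt{head}, and the tail is at the opposite corner $C$ with colour \texttt{tail}, so Case~I of Procedure~II fixes $A$ as origin, $\overrightarrow{AB}$ as the $x$-axis and $\overrightarrow{AD}$ as the $y$-axis. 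In \texttt{Rearrange} only inner robots move, so head and tail stay at $A$ and $C$; and every prescribed move keeps the mover inside $ABCD$ (an upward move from $H_i$ is enabled only when $a'(i)>a(i)\ge 0$, so there is an inner target above $H_i$ and hence $H_{i+1}$ is inside the bottom-aligned embedded target, a fortiori inside the SER; symmetrically for downward moves; horizontal moves are explicitly confined to the SER). Since the opposite corners $A,C$ stay occupied and nothing leaves the rectangle, $ABCD$ is invariant, and with it the coordinate system and the predicates $C_{lumi},C_{corner},C_{enough},C_{rect}$.

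\textbf{Potential.} Write $\rho(i),\tau(i)$ for the numbers of inner robots and inner target positions on $H_i$, and set $\Phi=\sum_i|a(i)-a'(i)|$. From $a(i)+\rho(i)+b(i)=a'(i)+\tau(i)+b'(i)=|\mathcal{C}''|$ one gets the identity $a'(i)-a(i)=b(i+1)-b'(i+1)$. A Case~I move sends a robot from $H_i$ to $H_{i+1}$, which raises $a(i)$ by $1$ and leaves every other $a(\cdot)$ fixed; condition (U1) says $a(i)<a'(i)$, so $\Phi$ drops by exactly $1$. A Case~II move sends a robot from $H_i$ to $H_{i-1}$, lowering $a(i-1)$ by $1$; the identity turns (D1) into $a'(i-1)-a(i-1)=b(i)-b'(i)<0$, so $\Phi$ again drops by $1$. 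Case~III moves keep every robot on its line, so they leave $\Phi$ unchanged. Hence $\Phi$ is a non-increasing non-negative integer and stabilises after finitely many vertical moves.

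\textbf{Progress.} The crucial claim is: if $\Phi>0$ then some $H_i$ satisfies the upward or the downward condition and prescribes a move that is actually executable. For the combinatorial half one takes the largest $i$ with $a'(i)>a(i)$ (resp.\ the smallest $i$ with $a'(i)<a(i)$) and reads off, through the identity, that either that line satisfies (U1)$\wedge$(U2) or a neighbouring line satisfies (D1)$\wedge$(D2). For executability, if the chosen line were empty of inner robots the emptiness would cascade downward — each empty line forces the one below it to satisfy the same condition via the first disjunct of (U2) (resp.\ (D2)) — until a line with $b(i)=0$ is reached, forcing $a(i)=|\mathcal{C}''|\ge a'(i)$ and contradicting (U1); the remaining obstruction, that every SER-node of $H_{i+1}$ is occupied, is excluded by (U2) together with $\tau(i+1)\le n$. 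Therefore, by fair activation, while $\Phi>0$ some activated robot performs a vertical move; since $\Phi$ cannot decrease forever, after finite time $\Phi=0$, i.e.\ $a(i)=a'(i)$ for all $i$, equivalently every $H_i$ is (permanently) saturated.

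\textbf{Horizontal phase and termination.} Once every line is permanently saturated, no Case~I or~II move is ever enabled (these need (U1) or (D1), impossible when all $a(i)=a'(i)$ and $b(i)=b'(i)$), so only Case~III moves occur and the multiset of occupied columns on each line is frozen. On a saturated line the inner robots and inner target positions are equinumerous; the rule moves the $j$-th robot from the left towards the $j$-th inner target from the left, and only into an empty neighbour, so robots never change their left-to-right order and never collide. If some robot is off its target, then, sorting the columns $c_1<\dots<c_k$ and targets $t_1<\dots<t_k$, either the smallest $j$ with $c_j>t_j$ has $c_j-1$ empty or the largest $j$ with $c_j<t_j$ has $c_j+1$ empty, so $\sum_j|c_j-t_j|$ strictly decreases; by fairness it reaches $0$, so after finite time $C_{inner}$ holds, and since $C_{lumi}$ has held throughout (the head never moved and keeps colour \texttt{head}, the tail keeps colour \texttt{tail}), $C_{inner}\wedge C_{lumi}$ becomes true. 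I expect the main obstacles to be pinning down the progress claim — getting the extremal-line bookkeeping and the emptiness/fullness cascades exactly right — and verifying collision-freeness under asynchrony, where a robot may move on a stale snapshot; the latter leans on the facts that each line releases at most one robot upward (resp.\ downward) per activation, that $H_i$ satisfying the upward condition precludes $H_{i+1}$ satisfying the downward condition (so the two ends of an edge are never simultaneously reserved), and that the destination chosen by the ``leftmost/rightmost/closest'' tie-breaks remains free until the move completes.
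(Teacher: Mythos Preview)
Your proof is essentially correct and takes a genuinely different route from the paper's. The paper argues by \emph{gap-closing}: between any two consecutive saturated lines $H_i,H_j$ (or the virtual boundary lines when fewer than two exist) it shows, via a two-case analysis on the sign of $b'(i-1)-b(i-1)$, that after finitely many moves a new saturated line appears strictly between them, and iterates. You instead introduce the global potential $\Phi=\sum_i|a(i)-a'(i)|$, show it drops by exactly one on every vertical (Case~I or Case~II) move, and then handle the saturated horizontal endgame separately. Your approach is more modular and sidesteps the paper's nested case analysis; the paper's approach is more constructive about \emph{which} line saturates next and is closer in spirit to the move-complexity argument in Theorem~\ref{th3}.

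Two points deserve care. First, your identity $a'(i)-a(i)=b(i+1)-b'(i+1)$ assumes the inner-robot and inner-target counts match, but as written the paper defines $a',b'$ over $\mathcal{C}_{target}'$ (which still contains $t_{target}$) while $a,b$ are over $\mathcal{C}''$, giving an off-by-one. The paper's own proof tacitly uses the balanced version (it asserts $a'(i-1)=a(i-1)$ just below a saturated $H_i$, which fails under the literal definition), so this is almost certainly a typo for $\mathcal{C}_{target}''$; with the literal definition your Case~II step and the paper's argument would both break. Second, your extremal choice for the progress claim is not quite right: taking the largest $i$ with $a'(i)>a(i)$ can land you where $a'(i+1)<a(i+1)$ and $H_{i+1}$ is non-empty, so $H_i$ fails (U2) and neither $H_{i+1}$ nor $H_{i+2}$ need satisfy (D2). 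A clean fix is to first ask whether \emph{any} line satisfies the downward condition (take the smallest $i$ with $b'(i)>b(i)$; minimality gives $b'(i-1)\le b(i-1)$ and hence (D2) immediately), and only if none does conclude via the identity that $a'(j)\ge a(j)$ for all $j$, then pick the largest $j$ with strict inequality to get (U1)$\wedge$(U2). Your emptiness-cascade and destination-freeness arguments then go through as written.
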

\begin{proof}
Since only inner robots are moving in this function and no inner robot is allowed to step out of the SER formed by the head and tail robots, so $C_{corner}$, $C_{rect}$ and $C_{enough}$ remain true throughout the execution of \texttt{Rearrange}(). So the coordinate system decided through Procedure II also remains unchanged throughout.

If a nonempty horizontal line becomes saturated, then after a finite time, all robots on that line take their respective target positions by horizontal moves. It can be shown that while this horizontal movement continues, no collision or deadlock situation will occur. To show that $C_{inner}$ will become true within finite time, it is sufficient to show that within finite time all the horizontal lines will become saturated.

If the SER of the initial and target configurations are both lines, then the SER has only one horizontal line. This horizontal line is vacuously saturated. Next, suppose SER in the current configuration has more than one horizontal line. At some point, let there be a non-saturated horizontal line in the configuration. Let us consider two saturated horizontal lines, $H_i$ and $H_j$, such that $|i-j|\ne 1$ and all the horizontal lines in between them are non-saturated. If there is no saturated horizontal line or only one saturated horizontal line, then consider this scenario in the following way: Let the SER of the configuration be $ABCD$, where head and tail are respectively situated at $A$ and $C$. Then consider the horizontal line below $AB$ and the horizontal line above $CD$. We can consider these two lines as vacuous saturated lines. The scheme of the proof is that we show that after a finite time, another saturated horizontal line will form between the lines $H_i$ and $H_j$. Without loss of generality, let $i>j$. Note that there have to be at least two horizontal lines between $H_i$ and $H_j$. Consider the horizontal line $H_{i-1}$. Note that $H_{i-1}$ cannot satisfy the upward condition because $a'(i-1)=a(i-1)$. According to the assumption, $H_{i-1}$ is not a saturated horizontal line. So we must have $b'(i-1)>b(i-1)$ or $b'(i-1)<b(i-1)$.

\textit{Case I:} ($b'(i-1)>b(i-1)$) Let starting from $H_{i-1}$ and going downwards $H_k$ be the last horizontal line such that $b'(k)>b(k)$ and $k>j+1$. Then $b'(k-1)\le b(k-1)$ and $b'(p)>b(p)$ for all $p=i-1, i-2, \dots, k$. The existence of such a horizontal line is guaranteed because $b'(j+1)= b(j+1)$. Consider the horizontal line $H_k$. Then $H_k$ satisfies the downward condition. If $H_k$ is nonempty, then a robot will come down. $H_k$ horizontal line will keep satisfying the downward condition until $b'(k)=b(k)$ becomes true. Suppose $H_k$ is empty, then consider the first nonempty horizontal line $H_m$ above $H_k$. Then $H_m$ satisfies the downward condition. Then a robot comes down from $H_m$. That robot comes down to $H_k$ making $H_k$ nonempty. Hence, after a finite time, $b'(k)=b(k)$ becomes true. Now if at this time $b'(k+1)=b(k+1)$ then $H_k$ is saturated and our task is done. Since only robots were coming down through $H_{k+1}$ in this time interval, therefore $H_{k+1}$ was satisfying (D1), so the difference $b'(k+1)-b(k+1)$ can minimum reach zero. So we have only one remaining possibility at this time, that is, $b'(k+1)>b(k+1)$. Suppose $b'(k+1)>b(k+1)$, then now $H_{k+1}$ satisfies the downward condition. And similarly, after finite time, $b'(k+1)=b(k+1)$ becomes true, which implies that $H_k$ is saturated. 

\textit{Case II:} ($b'(i-1)<b(i-1)$) For this case, we have $a'(i-2)>a(i-2)$ and we have $a'(i-1)=a(i-1)$. Hence, $H_{i-2}$ satisfies the upward condition. If $H_{i-2}$ also satisfies the downward condition, then $H_{i-2}$ must be nonempty, and after a finite time, the required robot(s) will go down from $H_{i-2}$, making it no longer satisfy the downward condition. We assume $H_{i-2}$ satisfies the upward condition but not the downward condition. If $H_{i-2}$ is nonempty then a robot goes upward and reaches $H_{i-1}$. The $H_{i-2}$ will keep satisfying the upward condition, and the robot will keep coming up from $H_{i-2}$ until $H_{i-2}$ is empty or $a'(i-2)=a(i-2)$. If $a'(i-2)=a(i-2)$ turns true, then $H_{i-1}$ becomes saturated. If $a'(i-2)>a(i-2)$ is true and $H_{i-2}$ is empty, then consider the first nonempty horizontal line $H_m$ below $H_{i-1}$. Note that such a nonempty line must exist. Consider $H_{i-3}$ horizontal line. We have $a'(i-2)>a(i-2)$ is true and $H_{i-2}$ is empty. This forces it to satisfy $a'(i-3)>a(i-3)$. So, $H_{i-3}$ satisfies the upward condition. Similarly, we can show that all horizontal lines $H_{i-2},\dots, H_m$ satisfy the upward condition. Since $H_{m}$ satisfies the upward condition, a robot comes upward from $H_m$. And that robot reaches $H_{i-2}$ making $H_{i-2}$ non-empty. Hence, after a finite time, $H_{i-1}$ becomes saturated.
\end{proof}

\begin{lemma}[Correctness of phase~\textsc{Lumi}]\label{lm3}
If we have a configuration $\mathcal{C}$ in phase~\textsc{Lumi} at some time $t$, then after a finite time $ \overline{C}_h\land \neg C_{lumi}$ becomes true.
\end{lemma}
\begin{proof}
If at time $t$, $C_{inner}$ is true, the tail moves towards $t_{target}$. If tail is not at the horizontal line containing $t_{target}$, then it moves downwards. Note that, throughout this move, the coordinate system does not change because the larger side of the SER remains larger. After a finite number of moves downward, the tail reaches the horizontal line that contains $t_{target}$. On reaching the same horizontal line as $t_{target}$, tail moves left until it reaches $t_{target}$. While moving left, there can be a time when the SER of the configuration is a square where the head and tail robots are at opposite corners. In this scenario, there could be two possible coordinate systems, according to Procedure~II. But with respect to one of them, $C_{inner}$ will be true, and that coordinate system will be considered by the tail robot. Except for this, when the tail robot moves left to reach $t_{target}$, there will be no ambiguity or change in the coordinate system. After a finite number of moves towards the left, the tail reaches $t_{target}$ resulting in $\overline{C}_h=$ true. At this time, head robot is at origin with color \texttt{head}. On the next activation of the head robot, if the $C_{final}$ is not true, then the head robot turns its color to off and moves right. This results in $\overline{C}_h\land \neg C_{lumi}=$ true.

Next, if $C_{inner}$ is not true, then we show that after a finite time, $C_{inner}$ becomes true. If $C_{inner}$ is not true, then there are three types of moves by the tail robot. Firstly, when $C_{corner}$ is not true, the tail moves right. Since throughout this move the tail does not reach the corner, there is no ambiguity regarding the coordinate system according to Procedure~II and the coordinate system also does not change. Next, when $C_{enough}$ is not true, tail moves outside the SER. Since the target pattern has a finite dimension, after a finite move outside the SER, $C_{enough}$ becomes true. Next, if $C_{rect}$ is not true, then the tail robot moves one step outside the SER. After one move, the $C_{rect}$ becomes true. Hence, after a finite time, $C_{lumi}\land C_{corner}\land C_{enough}\land C_{rect}$ becomes true. From the Lemma~\ref{lemmaR}, after finite time, $C_{inner}$ becomes true. The flow of this phase is depicted in Fig.\ref{fig:fl1}.
\end{proof}

\begin{figure}[h!]
    \centering
    \includegraphics[width=.47\textwidth]{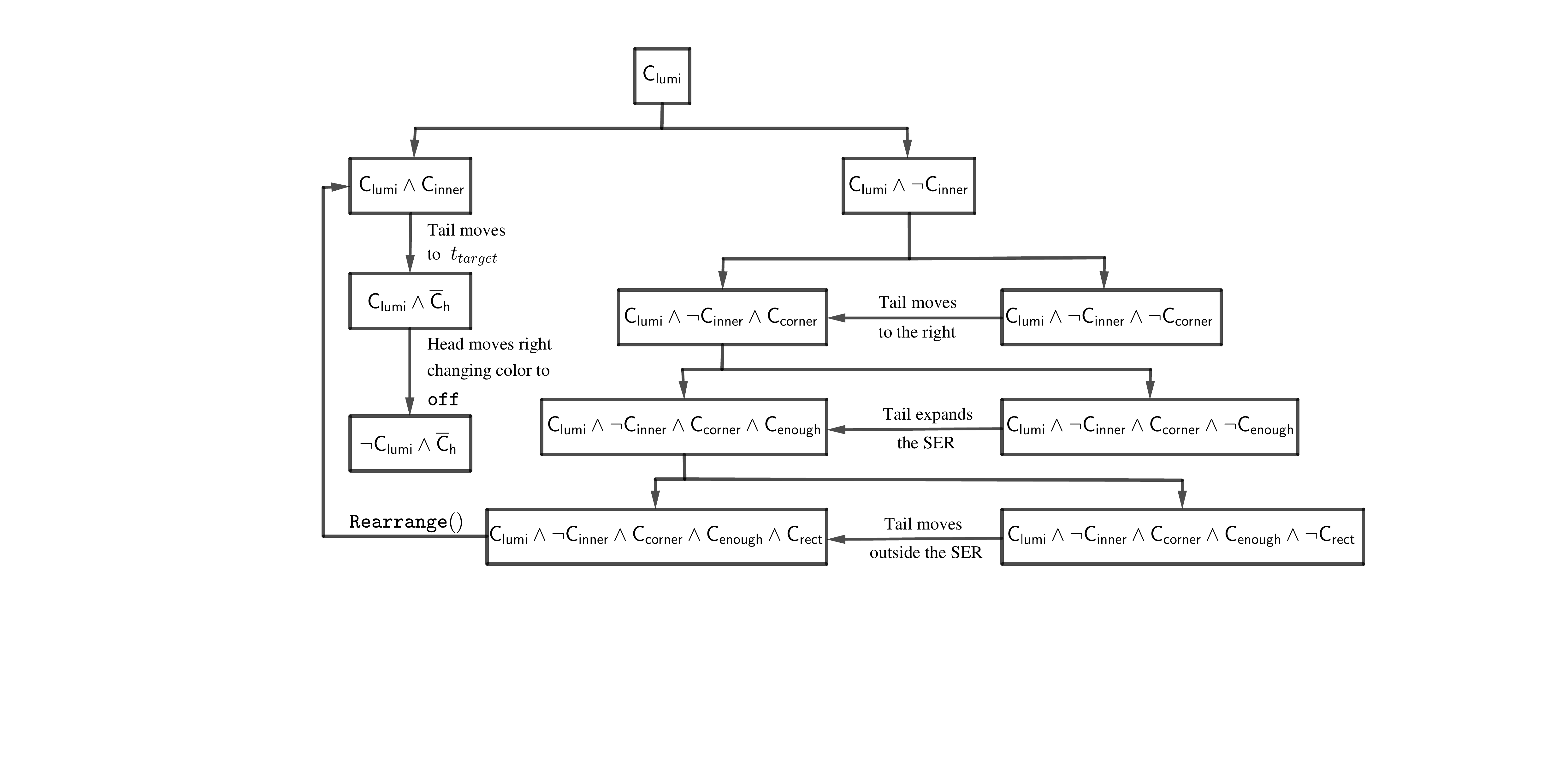}
    \caption{Illustration of phase~\textsc{Lumi}}
    \label{fig:fl1}
\end{figure}


Next, we prove the correctness of the Algorithm~\ref{algo:main} (proof is given in Appendix).

\begin{theorem}\label{th0}
If the initial configuration is asymmetric, then a set of $k$ asynchronous robots can form any pattern consisting of $k$ points in finite time by executing the algorithm \textsc{ApfMinSpace}.
\begin{proof}
If for the initial configuration $C_{final}$ is not true, then according to our assumption $C_{asym}\land \neg C_{lumi}$ is true. Therefore, initially, the algorithm enters into phase~\textsc{nonLumi}. The algorithm is correct if, within a finite time, robots move such that $C_{final}$ becomes true. 

If initially $\overline{C}_h$ is true, then from Lemma~\ref{lm1}, it results in $C_{final}=$ true within a finite time. If initially $\overline{C}_h$ is not true, from Lemma~\ref{lm1}, it results in $C_{lumi}\land \neg\overline{C}_h=$ true within a finite time. Then the algorithm enters phase~\textsc{Lumi}. From Lemma~\ref{lm3}, after finite time, $\overline{C}_h\land C_{lumi}$ becomes true. Suppose, at time $t_1$, $\overline{C}_h\land C_{lumi}$ becomes true. Let $\mathcal{C}_1$ be the configuration at time $t_1$. If at this point $C_{final}$ is not true, then in phase~\textsc{Lumi}, head changes its color to \texttt{off} and moves right, which results in $\overline{C}_h\land \neg C_{lumi}=$ true. Let, after the move of the head towards the right, the configuration become $\mathcal{C}_2$. Note that the SER, say, $ABCD$, is the same for both configurations $\mathcal{C}_1$ and $\mathcal{C}_2$. Let us name the head robot in $\mathcal{C}_1$ as $r$.


\begin{figure}[h!]
    \centering
    \includegraphics[width=.3\textwidth]{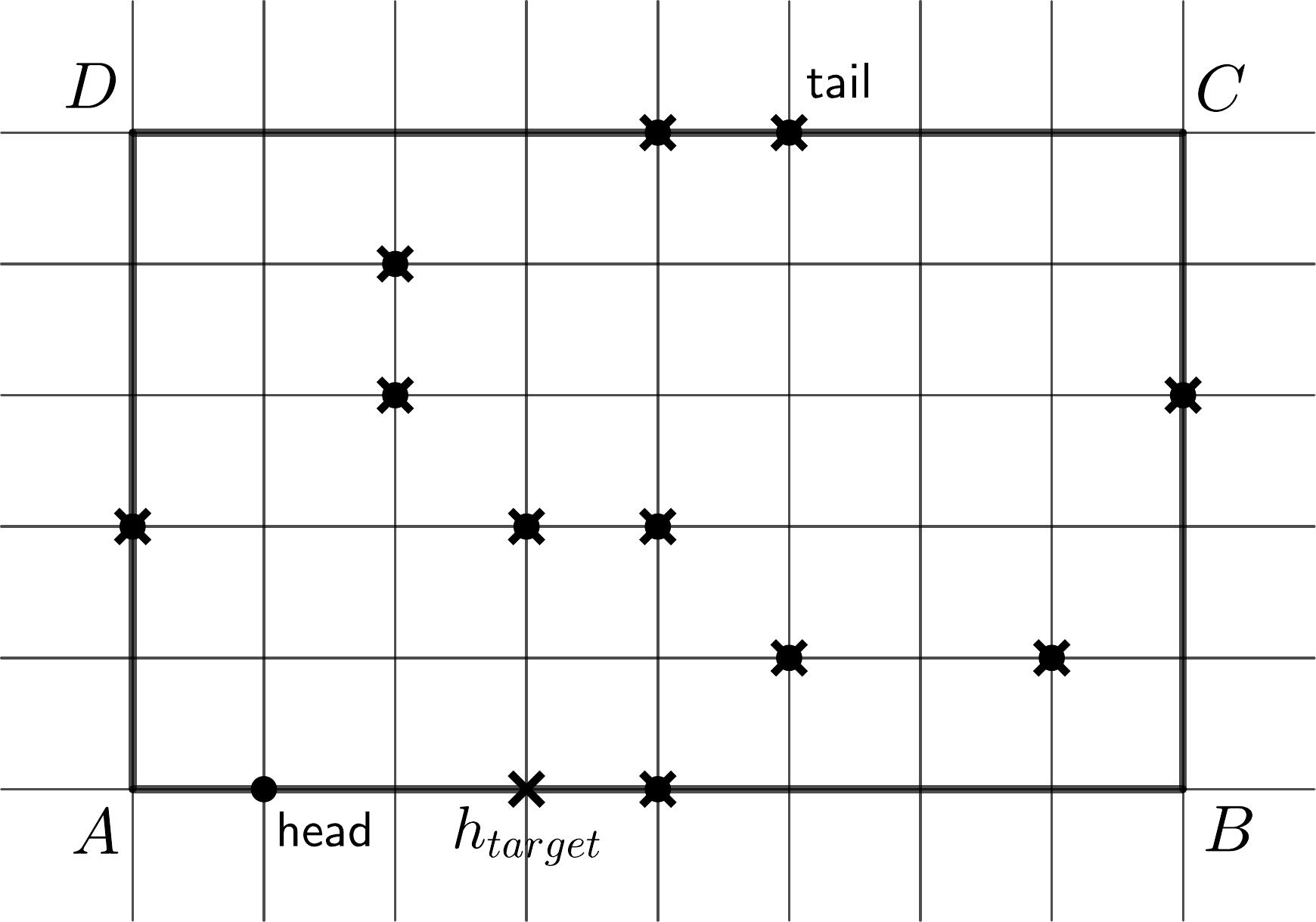}
    \caption{An image related to Theorem~\ref{th0}}
    \label{fig:th}
\end{figure}
If $\mathcal{C}_2\ne \mathcal{C}_{target}$, then the robot $r$ is closer than $h_{target}$ to $A$. So, $\lambda_{AB}$ in $\mathcal{C}_2$ is lexicographically strictly a larger string than $\lambda^{target}_{AB}$ because all the target positions are occupied except $h_{taregt}$ in $\mathcal{C}_2$ (See Fig.~\ref{fig:th}). Thus, $\lambda_{AB}$ in $\mathcal{C}_2$ is lexicographically strictly larger string, making $\mathcal{C}_2$ an asymmetric configuration. Therefore, after the finish of phase~\textsc{Lumi} the resultant configuration satisfies $\overline{C}_h\land \neg C_{lumi}\land C_{asym}$. Thus, the algorithm again enters in phase~\textsc{nonLumi} with $\overline{C}_h=$ true. From Lemma~\ref{lm1}, after a finite time, $C_{final}$ becomes true. The flow of the algorithm is depicted in Fig.~\ref{fig:fl2}.
\end{proof}
\end{theorem}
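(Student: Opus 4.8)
The plan is to assemble the theorem from the three phase-correctness lemmas by a case analysis driven by the predicates $C_{final}$, $\overline{C}_h$, and $C_{lumi}$, checking at each phase boundary that the hypotheses of the next lemma are met. Since the initial configuration is asymmetric and every light is \texttt{off}, if $C_{final}$ does not already hold then $C_{asym}\land\neg C_{lumi}$ holds, so by Algorithm~\ref{algo:main} the execution starts in phase~\textsc{nonLumi}; this is the entry point of the argument.

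First I would dispatch the easy branch: if $\overline{C}_h$ is true at the start, Lemma~\ref{lm1}(1) gives $C_{final}$ in finite time and we are done. Otherwise, Lemma~\ref{lm1}(2) yields $C_{lumi}\land\neg\overline{C}_h$ after finite time, at which point (since $C_{final}$ is false and $C_{lumi}$ holds) control passes to phase~\textsc{Lumi}. Lemma~\ref{lm3} then drives the configuration to $\overline{C}_h\land\neg C_{lumi}$ after finite time --- internally this invokes Lemma~\ref{lemmaR} to reach $C_{inner}$, after which the tail walks to $t_{target}$ and the head releases its color. I would record here that the SER is unchanged across the head's final rightward step, naming the pre- and post-step configurations $\mathcal{C}_1$ and $\mathcal{C}_2$.

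The hard part will be the phase boundary after phase~\textsc{Lumi}: I must show the resulting configuration $\mathcal{C}_2$ is again asymmetric (and has the same leading corner), so that re-entering phase~\textsc{nonLumi} is legitimate and Procedure~I recomputes the same global coordinate system. The key observation is that in $\mathcal{C}_2$ every target position except $h_{target}$ is occupied while the former head robot $r$ sits strictly closer to the leading corner $A$ than $h_{target}$; hence the string $\lambda_{AB}$ read in $\mathcal{C}_2$ dominates the embedded target string $\lambda^{target}_{AB}$ at the first bit where they differ, so $\lambda_{AB}$ is the unique lexicographically largest string and $\mathcal{C}_2$ is asymmetric with leading corner $A$. (If instead $\mathcal{C}_2=\mathcal{C}_{target}$, we are already finished.) This is precisely where the asymmetry-preservation reasoning used inside Lemma~\ref{lm1} has to be reapplied one last time.

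Finally, with $\overline{C}_h\land\neg C_{lumi}\land C_{asym}$ holding, the execution re-enters phase~\textsc{nonLumi} with $\overline{C}_h$ true, and Lemma~\ref{lm1}(1) closes the loop: $C_{final}$ becomes true in finite time, after which every robot resets its light to \texttt{off} and no further move occurs. Chaining the finitely many finite-time segments yields termination in finite time, and collision-freeness comes for free since it is established inside each of the three lemmas. I would accompany the case tree with the flow diagram (Fig.~\ref{fig:fl2}) to make the transitions transparent.
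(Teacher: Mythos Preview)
Your proposal is correct and follows essentially the same route as the paper's proof: the same case split on $\overline{C}_h$, the same chain Lemma~\ref{lm1}(2) $\to$ Lemma~\ref{lm3} $\to$ Lemma~\ref{lm1}(1), the same identification of configurations $\mathcal{C}_1,\mathcal{C}_2$ with unchanged SER, and the same key argument that $\lambda_{AB}$ in $\mathcal{C}_2$ strictly dominates $\lambda^{target}_{AB}$ because $r$ sits closer to $A$ than $h_{target}$, forcing $\mathcal{C}_2$ to be asymmetric with leading corner $A$. Your added remarks on collision-freeness and the final light reset are harmless extras not present in the paper's proof.
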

\begin{figure}[h!]
    \centering
    \includegraphics[width=0.25\textwidth]{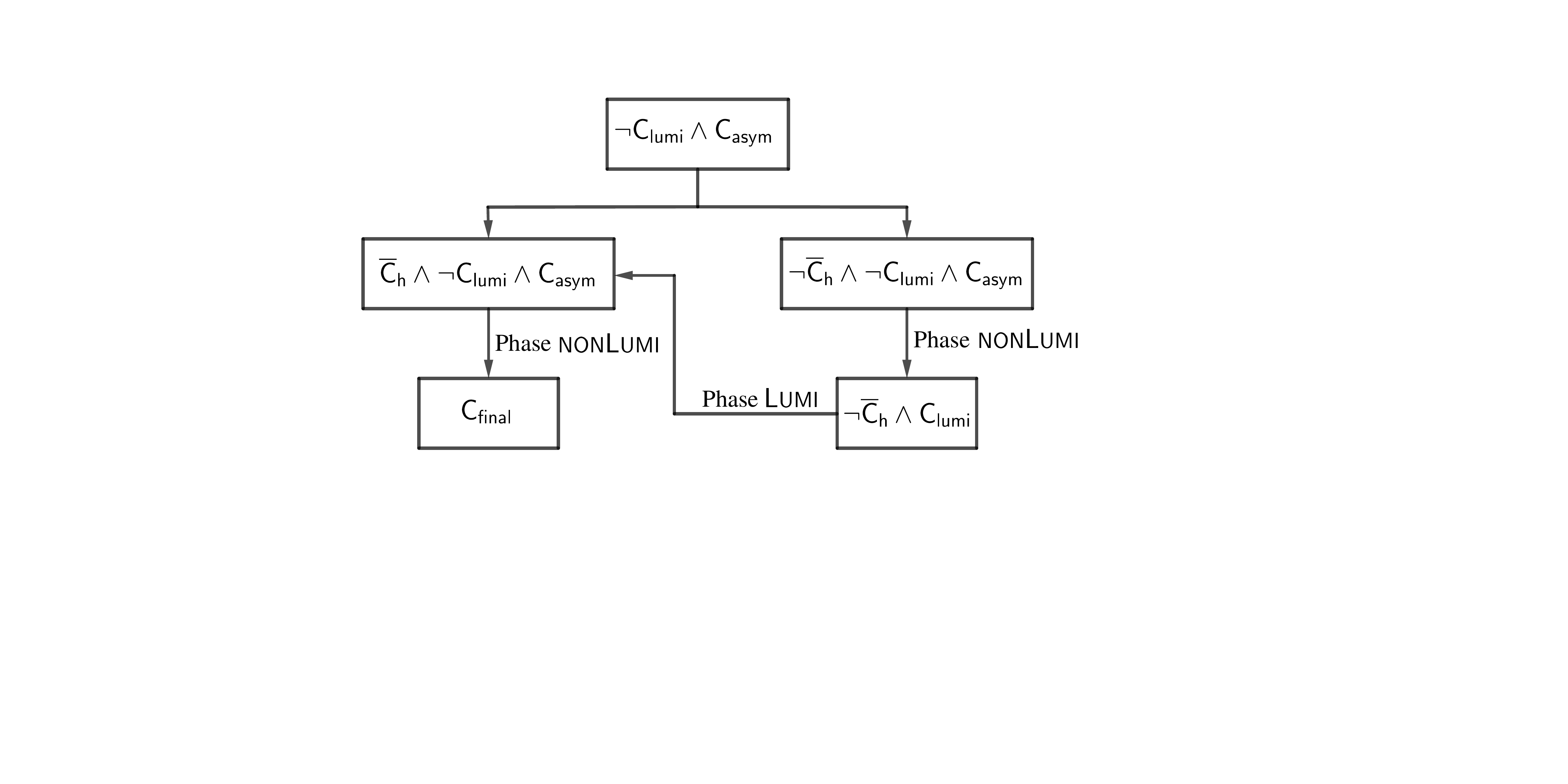}
    \caption{\footnotesize Flow of the Algorithm \textsc{ApfMinSpace}}
    \label{fig:fl2}
\end{figure}

\section{Space and Move complexity of the Proposed Algorithm}
In this section, in Theorem~\ref{th2} (proof is provided in the Appendix), we first calculate the maximum space required for the robots to execute the algorithm \textsc{ApfMinSpace}. Then we calculate the move complexity of the algorithm in Theorem~\ref{th3} (proof is provided in the Appendix). The move complexity of an algorithm is formally defined as the total number of moves made by the robots throughout the execution of the algorithm. Let $\mathcal{D}$ be the dimension of the minimum square, which can contain both the initial and the target configuration.

\begin{theorem}\label{th2}
Let $m\times n$ ($m\ge n$) and $m'\times n'$ ($m'\ge n'$) be the dimensions of the SER of the initial configuration and target configuration, respectively. Let $M=\max\{m,m'\}$ and $N=\max\{n,n'\}$. Then, throughout the execution of algorithm \textsc{ApfMinSpace}, the robots are only required to move inside a rectangle of dimension $M\times N$ or $(M+1)\times N$ in accordance with $M>N$ or $M=N$.
\end{theorem}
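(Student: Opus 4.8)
The plan is to track the smallest enclosing rectangle (SER) of the current configuration over the whole run of \textsc{ApfMinSpace} and to show that it is always contained in one fixed $M\times N$ rectangle (resp.\ $(M+1)\times N$ when $M=N$); since the algorithm never lets a robot step out of the current SER, this is exactly what is needed. The SER can grow only when a robot moves to a node strictly outside the current horizontal or vertical extent, so it suffices to examine every kind of move the algorithm makes: the head's moves in phase~\textsc{nonLumi}, the inner robots' moves in \texttt{Rearrange}(), and the tail's moves in phase~\textsc{Lumi} (together with the single rightward move of the head that terminates phase~\textsc{Lumi}).

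First I would argue that most of these moves cannot enlarge the SER. In phase~\textsc{nonLumi} only the head moves, and always horizontally: toward the leading corner $A$ (a corner of the current SER), or, in the re-entry with $\overline{C}_h$ true, toward $h_{target}$; in the latter case all other robots already occupy target positions, so the current SER contains the embedded target SER and hence contains $h_{target}$. Thus the head stays inside the current SER, which can only shrink. By Lemma~\ref{lemmaR} the coordinate system, hence the SER, is unchanged throughout \texttt{Rearrange}(). Finally, once $C_{inner}$ holds, the tail moves toward $t_{target}$ (first down, then left) and then the head turns \texttt{off} and steps one node right; since $t_{target}$ lies in the embedded target SER and the head steps from corner $A$ into the interior, neither of these enlarges the SER. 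Hence the SER can grow only during the tail's \emph{move-right}, \emph{expand-SER}, and \emph{move-outside-SER} steps of phase~\textsc{Lumi}.

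Next I would bound the SER through exactly those steps, starting from the observation that at the (unique) entry of phase~\textsc{Lumi} the SER is still contained in the original $m\times n$ footprint — it was the $m\times n$ SER initially and phase~\textsc{nonLumi} only shrinks it — so after running Procedure~II its dimensions $\ell\times s$ satisfy $\ell=|AB|\le m$ and $s=|BC|\le n$. The \emph{move-right} step slides the tail along the top boundary row within the current horizontal extent, causing no growth. The \emph{expand-SER} step makes the SER exactly $\max(\ell,m')\times\max(s,n')$, because the tail lengthens the $\overrightarrow{AB}$ side to $m'$ and the $\overrightarrow{AD}$ side to $n'$ only to the extent needed, and this is $\le M\times N$ since $\max(\ell,m')\le\max(m,m')=M$ and $\max(s,n')\le\max(n,n')=N$. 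The \emph{move-outside-SER} step is triggered only when the SER is a square, say of side $w$; then $C_{enough}$ forces $w\ge m'$ and $w\ge n'$, while the bounds just obtained give $w\le m$ (or $w=\max(\ell,m')\le M$) and $w\le n$ (or $w=\max(s,n')\le N$), so $w\le\min(M,N)=N$ in every case. The tail then extends the $x$-direction, turning the square into a $(w+1)\times w$ rectangle: if $w<M$ then $w+1\le M$ and $w\le N$, so it lies in $M\times N$; and if $w=M$, then $w\le N\le M=w$ forces $M=N$, and the rectangle is precisely $(M+1)\times N$.

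Putting the pieces together, at every instant the configuration's SER is contained in an $M\times N$ rectangle when $M>N$ and in an $(M+1)\times N$ rectangle when $M=N$; moreover, every enlargement extends the SER rightward or upward while keeping the origin corner $A$ (which is fixed throughout the execution) in place, so all configurations lie inside a single rectangle of the stated dimensions anchored at $A$, and the theorem follows. The points that need real care are: confirming that the \emph{expand-SER} and \emph{move-outside-SER} operations are always performed along the designated longer ($\overrightarrow{AB}$) direction, so the rectangle never grows taller than wide beyond the one-cell slack; verifying the boundary behaviour of the head's and the tail's straight-line paths so they provably never leave the current SER; and handling the degenerate $1\times n$ (collinear) configurations, where the argument is identical but several SER computations collapse.
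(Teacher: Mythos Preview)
Your proposal is correct and follows essentially the same approach as the paper: both proofs identify that only the tail's expand-SER and move-outside-SER steps in phase~\textsc{Lumi} can enlarge the SER, verify that all other moves (head in phase~\textsc{nonLumi}, inner robots in \texttt{Rearrange}(), head's final rightward step, tail's moves toward $t_{target}$) stay within the current SER, and then bound those two tail steps by $M\times N$ and $(M+1)\times N$ respectively. Your version is in fact more explicit than the paper's short argument, carefully tracking the dimensions $\ell\times s$ at entry to phase~\textsc{Lumi} and doing the $w\le N$ case analysis for the square step; the paper simply asserts that ``the tail expands the SER exactly to fit the target pattern'' and handles $M=N$ in one sentence.
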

\begin{proof}
We try to find the robots that move out of the current SER at any moment. They take up more space. In phase~\textsc{nonLumi}, the head robot only moves and maximum reaches at the leading corner of the current SER, which means that the head robot never steps out of the current SER while being in this phase. In phase~\textsc{Lumi}, head robot only moves inside the SER. An inner robot only moves in \texttt{Rearrange} function in this phase, where they are also not allowed to step out of the current SER. In this phase, the tail robot only steps out of the current SER if it has to expand the SER to contain the target pattern. Hence, if the SER of the initial configuration can contain the target pattern, then no robot steps out of the SER of the current configuration. Otherwise, the tail expands the SER exactly to fit the target pattern. Hence, the robots only move inside a rectangle with minimum dimensions that contains both the initial and target configurations. Precisely, the robots only move inside a rectangle of dimension $M\times N$. Further, if $M=N$ then the tail moves one step away from the current SER to make the SER a non-square rectangle in phase~\textsc{Lumi}. Hence we get the maximum required space as stated in the Theorem~\ref{th2}.
\end{proof}

\begin{theorem}\label{th3}
The algorithm \textsc{ApfMinSpace} requires each robot to make $O(\mathcal{D})$  moves.
\end{theorem}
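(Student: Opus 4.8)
The plan is to bound the number of moves made by each robot in each of the three phases (\textsc{nonLumi}, \textsc{Lumi}, and the \texttt{Rearrange} function) separately, and then argue that the algorithm passes through these phases only a bounded number of times. Throughout I will use Theorem~\ref{th2}, which guarantees that every robot stays inside an $M \times N$ (or $(M+1)\times N$) rectangle, so $\mathcal{D} = \max\{M,N\}$ controls all distances; hence any monotone-progress movement contributes only $O(\mathcal{D})$ moves per robot.

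First I would handle the easy phases. In phase~\textsc{nonLumi} only the head robot moves, and by the analysis in Lemma~\ref{lm1} each of its moves strictly decreases its (horizontal) distance either to the leading corner $A$ or to $h_{target}$, both of which lie on a side of length $O(\mathcal{D})$; so the head makes $O(\mathcal{D})$ moves each time this phase runs. In phase~\textsc{Lumi} outside of \texttt{Rearrange}, only the tail robot moves: it may walk right along a side, walk outside to expand the SER (bounded by $m' , n' = O(\mathcal{D})$ extra steps), possibly one step to break a square, and finally descend and move left to reach $t_{target}$ — every one of these segments is monotone and confined to the $O(\mathcal{D}) \times O(\mathcal{D})$ rectangle, so the tail makes $O(\mathcal{D})$ moves. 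The head makes at most one additional move (turning \texttt{off} and stepping right).

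The crux is the \texttt{Rearrange} function, where many inner robots move and the motion is not obviously monotone. Here I would introduce a potential function, e.g. $\Phi = \sum_i \big(|a'(i) - a(i)| + |b'(i) - b(i)|\big)$ summed over the $O(\mathcal{D})$ horizontal lines, or equivalently the total vertical displacement needed, and argue via the case analysis (U1)–(U2) and (D1)–(D2) of Lemma~\ref{lemmaR} that every vertical move of an inner robot strictly decreases $\Phi$; since initially $\Phi = O(k\mathcal{D}) = O(\mathcal{D}^2)$ and each move drops it by at least one, the \emph{total} number of vertical moves over all inner robots is $O(\mathcal{D}^2)$. To get the \emph{per-robot} $O(\mathcal{D})$ bound, I would show that a single inner robot never changes vertical direction: the upward/downward conditions are defined so that a robot that has started moving up on line $H_i$ keeps moving up until its line is saturated and is never later asked to move down (and symmetrically), so each inner robot crosses at most $O(\mathcal{D})$ horizontal lines in one monotone vertical sweep, then makes at most $O(\mathcal{D})$ horizontal moves when its line saturates. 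Finally, the algorithm enters phase~\textsc{nonLumi}, then phase~\textsc{Lumi}, then phase~\textsc{nonLumi} again (the flow in Fig.~\ref{fig:fl2}), i.e.\ each phase is executed a constant number of times, so summing the per-phase bounds gives $O(\mathcal{D})$ moves per robot.

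The main obstacle I anticipate is the per-robot (rather than aggregate) bound inside \texttt{Rearrange}: proving that no inner robot ever reverses its vertical direction, and that a robot which has been displaced horizontally to fill an empty node on an adjacent line does not subsequently get pulled back and forth as neighboring lines saturate in different orders. This requires carefully checking the tie-breaking rules (leftmost/rightmost, closest robot) in Cases I–III of the function against all the orderings in which lines can become saturated, to certify that each robot's trajectory is an $L$-shaped (one vertical segment, one horizontal segment) path of length $O(\mathcal{D})$.
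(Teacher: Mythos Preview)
Your approach is essentially the paper's: bound head and tail separately by monotone progress inside the $O(\mathcal{D})\times O(\mathcal{D})$ rectangle, then show each inner robot in \texttt{Rearrange} moves monotonically in its vertical direction. The key lemma you single out --- that once an inner robot moves up it never later moves down, and vice versa --- is exactly what the paper proves and is the heart of the argument. Your potential function $\Phi$ is superfluous: the paper never introduces an aggregate potential, because the per-robot directional monotonicity already gives the per-robot bound directly.

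Where your intuition is slightly off is the final shape of the trajectory. It is \emph{not} a single $L$ (one vertical segment, one horizontal segment). The paper's analysis (summarised in Fig.~\ref{fig:locus}) shows that an upward-moving inner robot first makes a pure vertical run through empty lines until it reaches a line $H_i$ with $a'(i+1)=a(i+1)$; on that $H_i$ it may need up to $\mathcal{D}$ \emph{horizontal} moves to position itself below an empty node of $H_{i+1}$ (this is the ``closest robot on $H_i$ moves to $v$'' clause of Case~I, which is a horizontal move since the robot's upward neighbour is occupied); it then steps up to $H_{i+1}$, and the paper argues it never leaves $H_{i+1}$ thereafter, making at most $\mathcal{D}$ further horizontal moves to its target once the line saturates. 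So the locus has \emph{two} horizontal segments separated by a single vertical step, and the per-robot bound comes out to $3\mathcal{D}$, not $2\mathcal{D}$. This is precisely the obstacle you flagged at the end; the resolution is not that the path is an $L$, but that the extra horizontal detour provably occurs on at most one line (the last one before the robot's final line), because on every earlier line the upward condition (U2) held via the ``$H_{i+1}$ empty'' alternative and the robot went straight up.
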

\begin{proof}
First, we consider the movements of the head and tail robots. The head robot only moves through the $x$ axis, and its maximum locus is from initial position to origin and then origin to head-target. Hence, head maximum makes $2\mathcal{D}$ moves. The tail robot might change its horizontal line to reach the horizontal line that contains $t_{target}$ but only once, then it moves to the $t_{target}$. Therefore, the tail makes at most $2\mathcal{D}$ moves.

Next, let $r$ be an inner robot that initially belonged to the horizontal line $H_i$. If $H_i$ is saturated, then $r$ makes maximum $\mathcal{D}$ moves in \texttt{Rearrange} function to reach its respective target position. Suppose $H_i$ is not saturated initially. Firstly, we make an observation. If for a horizontal line $H_i$, $a'(i)\le a(i)$ ($b'(i)\le b(i)$), then no robot ever goes upward (downward) from $H_i$. If $a'(i)\le a(i)$ is true, then it implies and is implied by $b'(i+1)\ge b(i+1)$. If $a'(i)=a(i)$ then it implies and is implied by $b'(i+1)= b(i+1)$. So at this condition, neither $H_{i}$ satisfies the upward condition nor $H_{i+1}$ satisfies the downward condition. So, there will be no exchange of robots between these two horizontal lines. Suppose $a'(i)< a(i)$. Then this implies and is implied by $b'(i+1)> b(i+1)$. Then eventually leads to satisfying the downward condition for $H_{i+1}$. Let $l=b'(i+1)-b(i+1)$, then from the proposed algorithm, a unique fixed robot on $H_{i+1}$ robot comes down from $H_{i+1}$ to make the difference $b'(i+1)-b(i+1)=l-1$. If $l-1>0$, there is another fixed unique robot that comes down from $H_{i+1}$ making the difference $l-2$. Hence, eventually the difference $b'(i+1)-b(i+1)$ becomes zero. After this, no exchange of robots between the horizontal lines $H_{i}$ and $H_{i+1}$ takes place. Thus, if $a'(i)\le a(i)$ is true, then no robot ever goes upward from $H_i$. Similarly, one can show that, if for a horizontal line $H_i$, $b'(i)\le b(i)$, then no robot ever goes downward from $H_i$.

Thus, if $a'(i)\le a(i)$ and $b'(i)\le b(i)$, $r$ never leaves the $H_i$. Eventually $H_i$ gets saturated, so in this case also $r$ makes at most $\mathcal{D}$ moves. Otherwise, after some time, either it satisfies the upward condition, the downward condition, or both. In this case, either $r$ never leaves the horizontal line or $r$ goes upward or downward. Suppose $r$ goes upward, then at that time $a'(i)>a(i)$ and either $H_{i+1}$ is empty or $a'(i+1)=a(i+1)$. First, we show that if $a'(i+1)=a(i+1)$ is true, then $r$ never leaves $H_{i+1}$ after reaching there. If $a'(i+1)=a(i+1)$, then from the previous discussion, no robot ever goes up from $H_{i+1}$. After $r$ moves upward, if $a'(i)=a(i)$ becomes true, then $H_{i+1}$ is saturated. Otherwise, if $a'(i)>a(i)$ remains true, then $b'(i+1)<b(i+1)$. Then $H_{i+1}$ does not satisfy the downward condition. Hence, $H_{i+1}$ does not satisfy either the upward or downward condition. So then $r$ does not leave the $H_{i+1}$ after that.

Next, suppose $H_{i+1}$ is empty and $a'(i+1)>a(i+1)$. If $r$ moves upward, then it just makes one vertical movement to reach $H_{i+1}$. Hence, we conclude that if $r$ goes upward from $H_i$ under the condition $a'(i+1)=a(i+1)$ then it $r$ settles down at a target position on $H_{i+1}$. And if $r$ goes upward from $H_i$ under the condition that $H_{i+1}$ is empty, then it just makes a vertical movement to reach $H_{i+1}$. A similar conclusion can be drawn if $r$ starts moving downwards in the first place.


Next, we show that throughout the execution of the algorithm, if $r$ starts moving upward, then it never comes down after that, and if it starts moving downward initially, then it never goes upward after that. On the contrary, let the opposite happen. Then, without loss of generality, there exists $i$ such that $r$ goes upward from $H_i$ to $H_{i+1}$ and then, after some time, again comes down. $r$ goes upward, it implies $a'(i)>a(i)$ and any other robot can go upward after that only if $a'(i)>a(i)$ remains true. After a robot goes upward, we must have $a'(i)\ge a(i)$. That implies $b'(i+1)\le b(i+1)$, but with this $H_{i+1}$ can never satisfy the downward condition. So no robot can come downwards from $H_{i+1}$.

\begin{figure}[ht!]
    \centering
    \includegraphics[width=.4\textwidth]{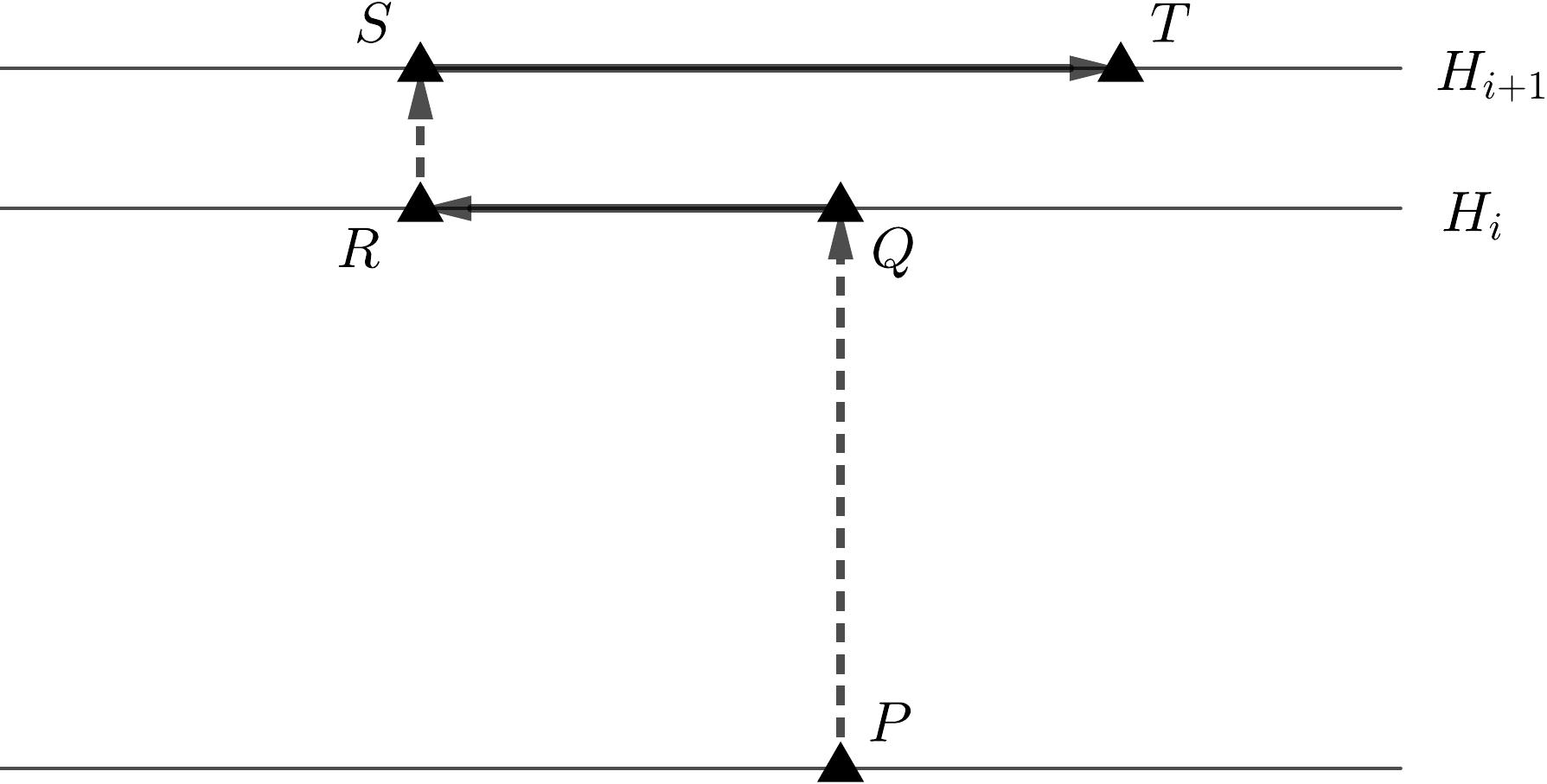}
    \caption{\footnotesize Locus of an inner robot going upward starting from $P$ and settling down at the target position $T$}
    \label{fig:locus}
\end{figure}

Summarising all, if a robot $r$ starts initially going upward, its locus would be a vertical movement in a straight line until it reaches a horizontal line $H_i$ where $a'(i+1)=a(i+1)$. This takes at most $\mathcal{D}$ moves. Then $r$ maximum makes $\mathcal{D}$ moves on the $H_{i}$ horizontal line before moving to $H_{i+1}$. Then $r$ moves upward on $H_{i+1}$ and settles down at a target node on $H_{i+1}$, which also takes at most $\mathcal{D}$ moves (See Fig.~\ref{fig:locus}). Hence, a robot $r$ makes $3\mathcal{D}$ moves in total. Similarly, one can show that if $r$ starts moving downwards initially, then also $r$ all total makes $3\mathcal{D}$ moves.
\end{proof}

If the number of robots present in the configuration is $k$, the total required move for the proposed algorithm is $O(\mathcal{D}k)$. In \cite{BOSE2020} authors proved that, any algorithm solving the APF problem requires $\Omega(\mathcal{D}k)$ moves. This shows that the proposed algorithm is asymptotically move-optimal.
 
\section{Conclusion}

This work provided an algorithm for solving the arbitrary pattern formation problem by robot swarms. The robots are considered autonomous, anonymous, and identical. The proposed algorithm works for asynchronous robots with one light that can take three different colors. The algorithm uses minimal space to solve the \textsc{Apf} problem (Theorem~\ref{th2}). Further, the algorithm is asymptotically move-optimal (Theorem~\ref{th3}). Even though the proposed algorithm is considered over an infinite rectangular grid, the algorithm can be easily modified to work on a finite rectangular grid (see Appendix for the preliminary idea) if the dimension of the grid is large enough as required by Theorem~\ref{th2} (this part shall be discussed in a detailed version of the work).

This work does not investigate (due to space constraints) whether the algorithm is asymptotically time-optimal or not. If the proposed algorithm is not time-optimal, then it would be interesting to find out whether there exists an algorithm that is asymptotically move-optimal, time-optimal, and also space-optimal. Further, in the proposed algorithm, for a case where $M=N$, the algorithm requires the space $(M+1)\times N$. We do not know whether this can be improved to $M\times N$ but it is under process. Even though the proposed algorithm is asymptotically move-optimal, we believe that the total required move is better than existing move-optimal \textsc{Apf} algorithms (which shall be investigated in a detailed version). Next, this work uses luminous robots, but it will be interesting to find the lower bound of the space complexity of \textsc{Apf} algorithms for $\mathcal{OBLOT}$ robot model.

\bibliographystyle{ACM-Reference-Format}
\bibliography{sample-base}


\begin{thebibliography}{17}


\ifx \showCODEN    \undefined \def \showCODEN     #1{\unskip}     \fi
\ifx \showDOI      \undefined \def \showDOI       #1{#1}\fi
\ifx \showISBNx    \undefined \def \showISBNx     #1{\unskip}     \fi
\ifx \showISBNxiii \undefined \def \showISBNxiii  #1{\unskip}     \fi
\ifx \showISSN     \undefined \def \showISSN      #1{\unskip}     \fi
\ifx \showLCCN     \undefined \def \showLCCN      #1{\unskip}     \fi
\ifx \shownote     \undefined \def \shownote      #1{#1}          \fi
\ifx \showarticletitle \undefined \def \showarticletitle #1{#1}   \fi
\ifx \showURL      \undefined \def \showURL       {\relax}        \fi
\providecommand\bibfield[2]{#2}
\providecommand\bibinfo[2]{#2}
\providecommand\natexlab[1]{#1}
\providecommand\showeprint[2][]{arXiv:#2}

\bibitem[Adhikary et~al\mbox{.}(2022)]%
        {ABKS22}
\bibfield{author}{\bibinfo{person}{Ranendu Adhikary}, \bibinfo{person}{Kaustav
  Bose}, \bibinfo{person}{Manash~Kumar Kundu}, {and} \bibinfo{person}{Buddhadeb
  Sau}.} \bibinfo{year}{2022}\natexlab{}.
\newblock \showarticletitle{Mutual visibility on grid by asynchronous luminous
  robots}.
\newblock \bibinfo{journal}{\emph{Theoretical Computer Science}}
  \bibinfo{volume}{922} (\bibinfo{year}{2022}), \bibinfo{pages}{218--247}.
\newblock
\showISSN{0304-3975}
\urldef\tempurl%
\url{https://www.sciencedirect.com/science/article/pii/S0304397522002481}
\showURL{%
\tempurl}


\bibitem[Bose et~al\mbox{.}(2020)]%
        {BOSE2020}
\bibfield{author}{\bibinfo{person}{Kaustav Bose}, \bibinfo{person}{Ranendu
  Adhikary}, \bibinfo{person}{Manash~Kumar Kundu}, {and}
  \bibinfo{person}{Buddhadeb Sau}.} \bibinfo{year}{2020}\natexlab{}.
\newblock \showarticletitle{Arbitrary pattern formation on infinite grid by
  asynchronous oblivious robots}.
\newblock \bibinfo{journal}{\emph{Theoretical Computer Science}}
  \bibinfo{volume}{815} (\bibinfo{year}{2020}), \bibinfo{pages}{213--227}.
\newblock
\showISSN{0304-3975}
\urldef\tempurl%
\url{https://www.sciencedirect.com/science/article/pii/S0304397520301006}
\showURL{%
\tempurl}


\bibitem[Bose et~al\mbox{.}(2021)]%
        {BoseDS21}
\bibfield{author}{\bibinfo{person}{Kaustav Bose}, \bibinfo{person}{Archak Das},
  {and} \bibinfo{person}{Buddhadeb Sau}.} \bibinfo{year}{2021}\natexlab{}.
\newblock \showarticletitle{Pattern Formation by Robots with Inaccurate
  Movements}. In \bibinfo{booktitle}{\emph{25th International Conference on
  Principles of Distributed Systems, {OPODIS} 2021, December 13-15, 2021,
  Strasbourg, France}} \emph{(\bibinfo{series}{LIPIcs},
  Vol.~\bibinfo{volume}{217})}, \bibfield{editor}{\bibinfo{person}{Quentin
  Bramas}, \bibinfo{person}{Vincent Gramoli}, {and} \bibinfo{person}{Alessia
  Milani}} (Eds.). \bibinfo{publisher}{Schloss Dagstuhl - Leibniz-Zentrum
  f{\"{u}}r Informatik}, \bibinfo{pages}{10:1--10:20}.
\newblock
\urldef\tempurl%
\url{https://doi.org/10.4230/LIPIcs.OPODIS.2021.10}
\showURL{%
\tempurl}


\bibitem[Bramas and Tixeuil(2016)]%
        {BramasT16}
\bibfield{author}{\bibinfo{person}{Quentin Bramas} {and}
  \bibinfo{person}{S{\'{e}}bastien Tixeuil}.} \bibinfo{year}{2016}\natexlab{}.
\newblock \showarticletitle{Probabilistic Asynchronous Arbitrary Pattern
  Formation (Short Paper)}. In \bibinfo{booktitle}{\emph{Stabilization, Safety,
  and Security of Distributed Systems - 18th International Symposium, {SSS}
  2016, Lyon, France, November 7-10, 2016, Proceedings}}
  \emph{(\bibinfo{series}{Lecture Notes in Computer Science},
  Vol.~\bibinfo{volume}{10083})}, \bibfield{editor}{\bibinfo{person}{Borzoo
  Bonakdarpour} {and} \bibinfo{person}{Franck Petit}} (Eds.).
  \bibinfo{pages}{88--93}.
\newblock
\urldef\tempurl%
\url{https://doi.org/10.1007/978-3-319-49259-9\_7}
\showURL{%
\tempurl}


\bibitem[Bramas and Tixeuil(2018)]%
        {BramasT18}
\bibfield{author}{\bibinfo{person}{Quentin Bramas} {and}
  \bibinfo{person}{S{\'{e}}bastien Tixeuil}.} \bibinfo{year}{2018}\natexlab{}.
\newblock \showarticletitle{Arbitrary Pattern Formation with Four Robots}. In
  \bibinfo{booktitle}{\emph{Stabilization, Safety, and Security of Distributed
  Systems - 20th International Symposium, {SSS} 2018, Tokyo, Japan, November
  4-7, 2018, Proceedings}} \emph{(\bibinfo{series}{Lecture Notes in Computer
  Science}, Vol.~\bibinfo{volume}{11201})},
  \bibfield{editor}{\bibinfo{person}{Taisuke Izumi} {and} \bibinfo{person}{Petr
  Kuznetsov}} (Eds.). \bibinfo{publisher}{Springer}, \bibinfo{pages}{333--348}.
\newblock
\urldef\tempurl%
\url{https://doi.org/10.1007/978-3-030-03232-6\_22}
\showURL{%
\tempurl}


\bibitem[Cicerone et~al\mbox{.}(2023)]%
        {cicerone20}
\bibfield{author}{\bibinfo{person}{Serafino Cicerone},
  \bibinfo{person}{Alessia~Di Fonso}, \bibinfo{person}{Gabriele~Di Stefano},
  {and} \bibinfo{person}{Alfredo Navarra}.} \bibinfo{year}{2023}\natexlab{}.
\newblock \showarticletitle{Arbitrary pattern formation on infinite regular
  tessellation graphs}.
\newblock \bibinfo{journal}{\emph{Theor. Comput. Sci.}}  \bibinfo{volume}{942}
  (\bibinfo{year}{2023}), \bibinfo{pages}{1--20}.
\newblock
\urldef\tempurl%
\url{https://doi.org/10.1016/j.tcs.2022.11.021}
\showDOI{\tempurl}


\bibitem[Cicerone et~al\mbox{.}(2019)]%
        {Cicerone19}
\bibfield{author}{\bibinfo{person}{Serafino Cicerone},
  \bibinfo{person}{Gabriele~Di Stefano}, {and} \bibinfo{person}{Alfredo
  Navarra}.} \bibinfo{year}{2019}\natexlab{}.
\newblock \showarticletitle{Embedded pattern formation by asynchronous robots
  without chirality}.
\newblock \bibinfo{journal}{\emph{Distributed Comput.}} \bibinfo{volume}{32},
  \bibinfo{number}{4} (\bibinfo{year}{2019}), \bibinfo{pages}{291--315}.
\newblock
\urldef\tempurl%
\url{https://doi.org/10.1007/s00446-018-0333-7}
\showURL{%
\tempurl}


\bibitem[Dieudonn{\'{e}} et~al\mbox{.}(2010)]%
        {DieudonnePV10}
\bibfield{author}{\bibinfo{person}{Yoann Dieudonn{\'{e}}},
  \bibinfo{person}{Franck Petit}, {and} \bibinfo{person}{Vincent Villain}.}
  \bibinfo{year}{2010}\natexlab{}.
\newblock \showarticletitle{Leader Election Problem versus Pattern Formation
  Problem}. In \bibinfo{booktitle}{\emph{Distributed Computing, 24th
  International Symposium, {DISC} 2010, Cambridge, MA, USA, September 13-15,
  2010. Proceedings}} \emph{(\bibinfo{series}{Lecture Notes in Computer
  Science}, Vol.~\bibinfo{volume}{6343})},
  \bibfield{editor}{\bibinfo{person}{Nancy~A. Lynch} {and}
  \bibinfo{person}{Alexander~A. Shvartsman}} (Eds.).
  \bibinfo{publisher}{Springer}, \bibinfo{pages}{267--281}.
\newblock
\urldef\tempurl%
\url{https://doi.org/10.1007/978-3-642-15763-9\_26}
\showURL{%
\tempurl}


\bibitem[Flocchini et~al\mbox{.}(2008)]%
        {FlocchiniPSW08}
\bibfield{author}{\bibinfo{person}{Paola Flocchini}, \bibinfo{person}{Giuseppe
  Prencipe}, \bibinfo{person}{Nicola Santoro}, {and} \bibinfo{person}{Peter
  Widmayer}.} \bibinfo{year}{2008}\natexlab{}.
\newblock \showarticletitle{Arbitrary pattern formation by asynchronous,
  anonymous, oblivious robots}.
\newblock \bibinfo{journal}{\emph{Theor. Comput. Sci.}} \bibinfo{volume}{407},
  \bibinfo{number}{1-3} (\bibinfo{year}{2008}), \bibinfo{pages}{412--447}.
\newblock
\urldef\tempurl%
\url{https://doi.org/10.1016/j.tcs.2008.07.026}
\showURL{%
\tempurl}


\bibitem[Ghosh et~al\mbox{.}(2022)]%
        {GGSS22}
\bibfield{author}{\bibinfo{person}{Satakshi Ghosh}, \bibinfo{person}{Pritam
  Goswami}, \bibinfo{person}{Avisek Sharma}, {and} \bibinfo{person}{Buddhadeb
  Sau}.} \bibinfo{year}{2022}\natexlab{}.
\newblock \showarticletitle{Move optimal and time optimal arbitrary pattern
  formations by asynchronous robots on infinite grid}.
\newblock \bibinfo{journal}{\emph{International Journal of Parallel, Emergent
  and Distributed Systems}} \bibinfo{volume}{0}, \bibinfo{number}{0}
  (\bibinfo{year}{2022}), \bibinfo{pages}{1–23}.
\newblock
\showeprint{https://doi.org/10.1080/17445760.2022.2124411}
\urldef\tempurl%
\url{https://doi.org/10.1080/17445760.2022.2124411}
\showURL{%
\tempurl}


\bibitem[Hector et~al\mbox{.}(2022)]%
        {HSVT22}
\bibfield{author}{\bibinfo{person}{Rory Hector}, \bibinfo{person}{Gokarna
  Sharma}, \bibinfo{person}{Ramachandran Vaidyanathan}, {and}
  \bibinfo{person}{Jerry~L. Trahan}.} \bibinfo{year}{2022}\natexlab{}.
\newblock \showarticletitle{Optimal Arbitrary Pattern Formation on a Grid by
  Asynchronous Autonomous Robots}. In \bibinfo{booktitle}{\emph{2022 IEEE
  International Parallel and Distributed Processing Symposium (IPDPS)}}.
  \bibinfo{pages}{1151--1161}.
\newblock
\urldef\tempurl%
\url{https://doi.org/10.1109/IPDPS53621.2022.00115}
\showDOI{\tempurl}


\bibitem[Hector(2022)]%
        {RHth}
\bibfield{author}{\bibinfo{person}{Rory~Alan Hector}.}
  \bibinfo{year}{2022}\natexlab{}.
\newblock \showarticletitle{"Practical Considerations and Applications for
  Autonomous Robot Swarms"}.
\newblock \bibinfo{journal}{\emph{(2022). LSU Doctoral Dissertations.5809.}}
  (\bibinfo{year}{2022}).
\newblock
\urldef\tempurl%
\url{https://doi.org/10.31390/gradschool\_dissertations.5809}
\showDOI{\tempurl}


\bibitem[Kundu et~al\mbox{.}(2022a)]%
        {KGGSX22}
\bibfield{author}{\bibinfo{person}{Manash~Kumar Kundu}, \bibinfo{person}{Pritam
  Goswami}, \bibinfo{person}{Satakshi Ghosh}, {and} \bibinfo{person}{Buddhadeb
  Sau}.} \bibinfo{year}{2022}\natexlab{a}.
\newblock \bibinfo{title}{Arbitrary pattern formation by asynchronous opaque
  robots on infinite grid}.
\newblock
\newblock
\urldef\tempurl%
\url{https://arxiv.org/abs/2205.03053}
\showURL{%
\tempurl}


\bibitem[Kundu et~al\mbox{.}(2022b)]%
        {KGGS2022}
\bibfield{author}{\bibinfo{person}{Manash~Kumar Kundu}, \bibinfo{person}{Pritam
  Goswami}, \bibinfo{person}{Satakshi Ghosh}, {and} \bibinfo{person}{Buddhadeb
  Sau}.} \bibinfo{year}{2022}\natexlab{b}.
\newblock \showarticletitle{Arbitrary pattern formation by opaque fat robots on
  infinite grid}.
\newblock \bibinfo{journal}{\emph{International Journal of Parallel, Emergent
  and Distributed Systems}} \bibinfo{volume}{37}, \bibinfo{number}{5}
  (\bibinfo{year}{2022}), \bibinfo{pages}{542--570}.
\newblock
\showeprint{https://doi.org/10.1080/17445760.2022.2088750}
\urldef\tempurl%
\url{https://doi.org/10.1080/17445760.2022.2088750}
\showURL{%
\tempurl}


\bibitem[Sharma et~al\mbox{.}(2021)]%
        {SVT2020}
\bibfield{author}{\bibinfo{person}{Gokarna Sharma},
  \bibinfo{person}{Ramachandran Vaidyanathan}, {and} \bibinfo{person}{Jerry~L.
  Trahan}.} \bibinfo{year}{2021}\natexlab{}.
\newblock \showarticletitle{Optimal Randomized Complete Visibility on a Grid
  for Asynchronous Robots with Lights}.
\newblock \bibinfo{journal}{\emph{Int. J. Netw. Comput.}} \bibinfo{volume}{11},
  \bibinfo{number}{1} (\bibinfo{year}{2021}), \bibinfo{pages}{50--77}.
\newblock
\urldef\tempurl%
\url{http://www.ijnc.org/index.php/ijnc/article/view/242}
\showURL{%
\tempurl}


\bibitem[Suzuki and Yamashita(1999)]%
        {Suzuki96}
\bibfield{author}{\bibinfo{person}{Ichiro Suzuki} {and}
  \bibinfo{person}{Masafumi Yamashita}.} \bibinfo{year}{1999}\natexlab{}.
\newblock \showarticletitle{Distributed Anonymous Mobile Robots: Formation of
  Geometric Patterns}.
\newblock \bibinfo{journal}{\emph{SIAM J. Comput.}} \bibinfo{volume}{28},
  \bibinfo{number}{4} (\bibinfo{year}{1999}), \bibinfo{pages}{1347--1363}.
\newblock
\showeprint{https://doi.org/10.1137/S009753979628292X}
\urldef\tempurl%
\url{https://doi.org/10.1137/S009753979628292X}
\showURL{%
\tempurl}


\bibitem[Yamashita and Suzuki(2010)]%
        {YAMASHITA10}
\bibfield{author}{\bibinfo{person}{Masafumi Yamashita} {and}
  \bibinfo{person}{Ichiro Suzuki}.} \bibinfo{year}{2010}\natexlab{}.
\newblock \showarticletitle{Characterizing geometric patterns formable by
  oblivious anonymous mobile robots}.
\newblock \bibinfo{journal}{\emph{Theoretical Computer Science}}
  \bibinfo{volume}{411}, \bibinfo{number}{26} (\bibinfo{year}{2010}),
  \bibinfo{pages}{2433--2453}.
\newblock
\showISSN{0304-3975}
\urldef\tempurl%
\url{https://www.sciencedirect.com/science/article/pii/S0304397510000745}
\showURL{%
\tempurl}


\end{thebibliography}

\appendix
 \newpage
\begin{center}
\LARGE\textbf{APPENDIX}
\end{center}

\section{An overview of the proposed algorithm}
\paragraph{Procedures to find leaders and fix global coordinate system} The algorithm calls two procedures named Procedure~I and Procedure~II. Both procedures are called to find the two leader robots, head and tail, and to fix the global coordinate system. The Procedure~I is called when $C_{lumi}$ is not true and the configuration is asymmetric. For an asymmetric configuration, it is always possible to find a unique global coordinate system. The procedure~II is called when $C_{lumi}$ is true. The configuration can be symmetric while calling this procedure, but the presence of different colors of robots breaks the symmetry. If the SER of the current configuration is a square and the head and tail robots are at opposite corners, then there are two possible ways to consider the global coordinate system. 

\paragraph{Phase~\textsc{nonLumi}} A robot infers itself in this phase when the configuration, visible, is asymmetric and $C_{lumi}$ is not true. This phase either terminates the algorithm by making $C_{final}$ true when $\overline{C}_h$ is true, otherwise it makes $C_{lumi}$ true. In this phase, if the tail robot's color is \texttt{off}, then it changes its color to \texttt{tail}. Then head robot moves either towards the target when $\overline{C}_h$ is true or towards the origin to make $C_{lumi}$ true.

\paragraph{Phase~\textsc{Lumi}} This phase first makes the tail robot move to the corner of the current SER opposite the head robot. Then it expands the SER enough so that it contains the target pattern. Then the tail robot moves outside the SER to make the SER a non-square rectangle, if required. Then the algorithm calls the function \texttt{Rearrange}(). This function aims to make $C_{inner}$ true, that is, all inner robots are occupying their respective target positions. We discuss this function in the next paragraph. After $C_{inner}$ is true, the tail moves to its tail-target $t_{target}$. This makes $\overline{C}_h=$ true. Then the head changes its color to \texttt{off} moves right. This last move of the head makes $\neg C_{lumi}=$ true and leaves the configuration asymmetric, if $C_{final}$ is not already true, to allow the algorithm to enter into phase~\textsc{nonLumi}.

\paragraph{Function~\texttt{Rearrange}()} In this function, the coordinate system is determined by Procedure~II and maintained by the unchanged positions of the head and tail robots throughout the execution of the function. This function aims to relocate robots such that each horizontal grid line contains exactly the number of robots that are required on that line according to the target embedding. When for a horizontal line these conditions are satisfied, we call it a saturated line. In this case, no more robot exchanges will take place on this line. In such a case, from case-III of this function, robots on this line move horizontally to take their respective target positions. If a horizontal line is not saturated, then there needs to be some exchange of robots through this line. We say there is a scarcity of robots above a horizontal line $H_i$ if the number of robots present above $H_i$ (this number is denoted as $a(i)$) is less than the total number of target positions above $H_i$ (this number is denoted as $a'(i)$). If there is a scarcity above $H_i$, i.e., $a'(i)>a(i)$, then robots are supposed to move up from $H_i$. This is (U1) in upward condition. But to avoid collision, we cannot let a robot go upward only based on scarcity. So we bring another condition. If $a'(i+1)>a(i+1)$ is also true at the time, then robots are supposed to move upward from $H_{i+1}$ also. But no robot should come on $H_{i+1}$ from above because that would increase the scarcity above. But if $H_{i+1}$ is empty at this point, then it is safe for a robot to move upward from $H_i$. So we have one condition in (U2): $a'(i+1)>a(i+1)$ and $H_{i+1}$ is empty. Another alternative condition in (U2) is that we have $a'(i+1)=a(i+1)$. In this case, there is no scarcity of robots above $H_{i+1}$ and no extra robot above either. So there will should not be any exchange of robots from above $H_{i+1}$. So a robot can find a suitable place on $H_{i+1}$ and move there without collision. Similar care has been considered for when there is a scarcity of robots below the $H_{i}$. Hence we have downward conditions. (D2) is not exactly similar to (U2) because a priority has been given to the downward movement of robots in order to avoid any deadlock.

\section{Adoption of the proposed algorithm for finite grid}

Here we discuss how the algorithm can be modified to adopt a finite grid scenario. Our algorithm exploits the infinite grid when it asks the tail to expand the SER. If the grid is finite, then the tail may reach a corner of the grid and cannot expand the SER anymore. In this case, tail can change its color to another color say, \texttt{tailCorner}. When head robot sees this color, it starts expanding the SER. If the dimension of the grid is large enough, then after a finite number of moves by the head robot, $C_{enough}$ will be true. Then if $C_{rect}$ is not true, then also similarly, head robot can take over to make $C_{rect}=$ true.

\end{document}